\renewcommand{\a}{\ensuremath{\alpha}}
\renewcommand{\b}{\ensuremath{\beta}}
\newcommand{\g}{\ensuremath{\gamma}}
\newcommand{\e}{\ensuremath{\epsilon}}
\renewcommand{\l}{\lambda}
\newcommand{\dint}{\int\displaylimits}
\newcommand{\be}{\pmb\e}
\newcommand{\s}{\sigma}
\newcommand{\B}{\ensuremath{\mathcal{B}}}
\theoremstyle{plain}
\newtheorem{theorem}{Theorem}
\newtheorem{lemma}{Lemma}
\newtheorem{corr}{Corollary}
\newcommand{\topline}{\hrule height 1pt width \textwidth \vspace*{2pt}}
\newcommand{\botline}{\vspace*{2pt}\hrule height 1pt width \textwidth \vspace*{4pt}}
\newtheorem{algo}{Algorithm} 
\numberwithin{equation}{section}
\numberwithin{algo}{section}
\numberwithin{table}{section}
\numberwithin{figure}{section}
\newcommand{\statesp}{\ensuremath{\mathcal X}}
\newcommand{\Y}{\ensuremath{\mathcal Y}}
\newcommand{\intersect}{\cap}
\newcommand{\imply}{\Longrightarrow}
\newcommand{\R}{\ensuremath{\mathbb R}}
\newcolumntype{Z}{>{\centering \arraybackslash}X}
\title{Multiplicative random walk Metropolis-Hastings on the real line}
\author{Somak Dutta \\ {\small University of Chicago, IL, USA.}} 
\date{}
\begin{document}
\maketitle
\pagestyle{myheadings}
\markboth{S. Dutta}{Random dive MH}


\begin{abstract}
In this article we propose multiplication based random walk Metropolis Hastings (MH) algorithm on the real line. We call it the random dive MH (RDMH) algorithm. This algorithm, even if simple to apply, was not studied earlier in Markov chain Monte Carlo literature. One should not confuse RDMH with RWMH. It is shown that they are different, conceptually, mathematically
and operationally.  The kernel associated with theRDMH algorithm is shown to have standard properties like irreducibility, aperiodicity and Harris recurrence under some mild assumptions. These ensure basic convergence (ergodicity) of the kernel. Further the kernel is shown to be geometric ergodic for a large class of target densities on \R. This class even contains realistic target densities for which random walk or Langevin MH are not geometrically ergodic. Three simulation studies are given to demonstrate the mixing property and superiority of RDMH to standard MH algorithms on real line. A share-price return data is also analyzed and the results are compared with those available in the literature.
\end{abstract}

\textbf{Key words}: Markov chain Monte Carlo; Metropolis-Hastings algorithm; Random walk algorithm; Langevin algorithm; Multiplicative random walk; Geometric ergodicity; Thick tailed density; Share-price return.

\textbf{AMS classification number} \emph{Primary}: 65C05, 65C40. \emph{Secondary}: 60J10.


\section{Introduction}
Suppose $\pi$ is a density on a state-space $\statesp$ with respect to some dominating measure $\lambda$. 
Most often the state-space is a subset of the Euclidean space and $\lambda$ is the Lebesgue measure. 
In this article we will always assume $\lambda$ to be the Lebesgue measure. Statisticians' main aim is to study the characteristics of the density $\pi$. Sometimes (say, in Bayesian inference) $\pi$ may be a complicated (possibly unnormalized) density which is not analytically tractable. So, to study the characteristics of $\pi$, statisticians try to draw a sample from $\pi$. But then also there may not exist any effective simulation procedure to simulate from $\pi$. Thus the goal is shifted to draw an approximate sample from $\pi$. Markov chain Monte Carlo (MCMC) provides a method for doing this. The MCMC methods, quite famous for their effectiveness in drawing an approximate sample from a target density are widely used. One of the most famous MCMC algorithms is the Metropolis-Hastings (MH) algorithm \citep{metropolis1953,hastings1970}. Given a current state $x \in \statesp$ the MH algorithm proposes a new state $y$ from a \emph{proposal kernel density} $q(x\to y)$ on $\statesp$, and accepts it with the acceptance probability
\begin{equation}\label{eqn:MHacceptprob}
 \alpha(x \to y) = \min\left\{ \frac{\pi(y)q(y\to x)}{\pi(x)q(x\to y)},~1\right\}
\end{equation}
The corresponding MH kernel has stationary distribution $\pi(\cdot)$.

The random walk MH (RWMH) has the proposal kernel $q(x\to y) = q(|x-y|)$. We notice that generation a point from such a density is same as generating an $\e$ from $q(\e)$ and then setting $y = x + \e$. Other algorithms like Langevin MH (LMH), has proposal kernel $N(x + (\sigma^2/2)\nabla \log~\pi(x),\sigma^2)$

All of the aforementioned algorithms have certain disadvantages. For example, both the RWMH and LMH have slow mixing rates in certain cases. The acceptance rate for RWMH is high if $q(\cdot)$ is concentrated around zero (i.e. \emph{small step size}) but then a long chain is required to explore a substantial part of the state space. If a diffused proposal is used then the acceptance rate drops. For a multi--modal target (which is not known a priori in most cases) both the RWMH and the LMH chain may remain stuck at one or few of the modes and may  still pass convergence diagnostics. Obviously any inference based on such samples will be incorrect. Also, an important property like geometric ergodicity which are sufficient for CLT type results of ergodic averages are either not satisfied  by these algorithms \cite[see][for examples]{roberts1999} or some strong assumption on $\pi(\cdot)$ is needed. For example geometric ergodicity to hold for RWMH it is necessary (and sufficient) that the target density is \textit{log-concave in the tail} \citep{mengersen1996}. A density $\pi$ continuous and positive on \R ~is log-concave in the tail if there exists an $\a > 0$ and some $x_1 > 0$ such that 
\begin{equation}\label{eqn:logconcave}
\begin{split}
 y & \geq x \geq x_1 \quad \imply \quad\log\pi(x) - \log\pi(y)  \geq \a(y-x) \\
 y  & \leq x \leq -x_1  \quad \imply \quad\log\pi(x) - \log\pi(y)  \geq \a(x-y).
\end{split}
\end{equation}
and similarly
LMH is also geometrically ergodic under a strong assumption given in Theorem 4.1 of \cite{roberts1996}. These conditions are not satisfied for a large class of densities (e.g. the densities with thick-tails).

Thus a MH method which allows the proposed state to be far away from the current state and yet has good acceptance rate will be of much use in the statistical computing problems. It is even better if the algorithm is geometrically ergodic for a class of densities much larger than the classes for which this property is enjoyed by the standard algorithms. In this article we propose a new MH algorithm based on multiplying a random quantity with the states. Even if the algorithm appears simple we found that it has excellent convergence and mixing properties. It can explore the state space quite faster than the standard MCMC algorithms and has geometric ergodicity property for a huge class of target densities, for which the standard algorithms fails to be geometric ergodic. The main reason for this is because the dives can be made large or small each with significant probabilities. If the random multiplier is close to one, then the proposed point will be close to the current state and conversely. In RWMH, however, this proposal cannot be controlled easily. If the step size is chosen large then most of the proposed points would be far away from the current states and if the step size is chosen small then most of the proposed points would be very close to the current states.

There is obviously one issue with this algorithm -- the origin is an absorbing state. However, this is not vital since in major practical problems the variables are continuous and the origin has no mass. Thus we can safely remove it from the state space without disturbing the convergence. We emphasize that the RDMH algorithm exploits the multiplicative group structure of $\R - \{0\}$. The algorithm fails when the origin has positive probability attached to it (for example, when the state-space is the set of integers). The RWMH agorithm still work in that case. In other problems such as Bayesian testing with point null hypothesis and two-sided alternative, where the target distribution has a continuous part and also has a mass at zero, both RDMH and RWMH fails. 

MCMC techniques being extremely popular, the literature is rich with algorithms -- specialized or generic in nature. Many of them are special cases of MH algorithms with different forms  of proposal densities. We refer to \cite{liu2008} and \cite{robert2004} for book length discussions.

The structure of the article is as follows. We describe the new algorithm in section \ref{sec:rdmh}. As claimed already, the algorithm is \emph{new} in the sense that it is completely different in concept and in structure from the available multiplicative random walk MH. This is discussed in details in section \ref{sec:dissimilarity}. We discuss its convergence properties in section \ref{sec:convergence}. In section \ref{sec:appl:multimodal} we compare RDMH with the standard algorithms. Specifically we consider a bimodal target and see how RDMH explores the modes while RWMH cannot. We also consider an extreme mixture example where one of the components has very low dispersion compared to other. We then consider a thick tailed target for which RWMH is \emph{not} geometric ergodic while RDMH \emph{is}. In this example we see how asymptotic normality holds for the ergodic averages using RDMH while it fails to hold for the RWMH algorithm.In Section \ref{sec:appli:pricedata} we analyze a share price return data. Typically in such problems the posterior of one or more parameters are thick-tailed and asymmetric. The data and mode we consider are analyzed in \cite{fernandez1998} using Gibbs sampler. We found that the Gibbs sampler failed to explore the tail of the posterior of the location parameter while RDMH sampler did that with ease.  We conclude this article with an outlook on further works in Section \ref{sec:further}.

\section{Random dive MH}\label{sec:rdmh}
Suppose that $\pi$ is a target density function, probably unnormalized, on $\R$. At each iteration the algorithm proposes a state $y$ from the current state $x$ by multiplying a random quantity $\e$ with $x$ (i.e. $y = x\e$). The proposal is accepted with some probability depending on $x$ and $y$. We can classify the proposals into two classes depending on whether $|\e| \leq 1$ or $>1$. We call the case where $|\e| \leq 1$ an \emph{inner dive} and the case where $|\e| > 1$ an \emph{outer dive}. Notice that an outer dive can also be obtained by dividing the state $x$ by an $\e$ with $|\e| < 1$. Hence we can restrict the set from which the random multiplier $\e$ is drawn to the set  $\Y = (-1,1) - \{0\}$. Obviously the point zero is not considered so that the chain does not get stuck at zero. At each iteration we can take an inner dive or an outer dive at random. We call the chain symmetric if the probability for an inner dive is half and asymmetric otherwise. In this article we shall only consider the symmetric RDMH only. So with a proposal density $g(\e)$ for $\e$ on $\Y$,  the algorithm is given in Algorithm \ref{algo:unid}.
\begin{algo}\label{algo:unid} \topline Random dive MH on $\R$ \botline \normalfont \sffamily
\begin{itemize}
 \item Input: Initial value $x_0 \neq 0$, and number of iterations $N$. 
 \item For $t=0,\ldots,N-1$
\begin{enumerate}
 \item Generate $\e \sim g(\cdot)$ and $u \sim$ U$(0,1)$ independently
 \item If $0 < u < 1/2$, set 
\[ x' = x_t\e \quad \textrm{ and } \quad \alpha(x_t,\e) 
= \min\left\{\dfrac{\pi(x')}{\pi(x_t)} |\e|,~1 \right\}\]
\item Else set
\[ x' = x_t/\e \quad \textrm{ and } \quad \alpha(x_t,\e) 
= \min\left\{\dfrac{\pi(x')}{\pi(x_t)}\frac{1}{|\e|}, ~1 \right\}\]
\item Set \[x_{t+1} = \left\{\begin{array}{ccc}
 x' & \textsf{ with probability } & \a(x_t,\e) \\
 x_{t}& \textsf{ with probability } & 1 - \a(x_t,\e)
\end{array}\right.\]
\end{enumerate}
\item End for
\end{itemize}
\botline \rmfamily
\end{algo}

Notice that RDMH \emph{is} an MH algorithm with 
\begin{equation}\label{eqn:rdmh:proposal}
q(x \to y) = (1/2)~g(y/x)\dfrac{1}{|x|}\mathbb I(|y| < |x|) + (1/2)~g(x/y)\dfrac{|x|}{y^2}\mathbb I(|y| > |x|)
\end{equation}
Hence it follows that $\pi(\cdot)$ is indeed stationary for the chain. However other properties do not follow easily -- the assumptions in general results discussed in \cite{roberts2004} do not hold in this case. We prove them separately in the following section. Notice also that the terms $|\e|$ or its inverse in the acceptance ratios correspond to the Jacobian of the transformations : $x \mapsto x\e$ and $x\mapsto x/\e$ respectively. The acceptance ratios are free from the proposal density $g$ as in RWMH.

For each $x \neq 0$ we define the inner and outer acceptance regions respectively as,
\begin{eqnarray*}
 a(x) & = & \{ \e \in \Y ~:~ \pi(x\e)|\e|/\pi(x) \geq 1 \} \\
 A(x) & = & \{ \e \in \Y ~:~ \pi(x/\e)/(\pi(x) |\e| )\geq 1 \}
\end{eqnarray*}
Let $r(x) = \Y - a(x)$ and $R(x) = \Y - A(x)$ be the potential inner and outer rejection regions respectively.
We see that for each $x \ne 0$, the rejection probability is given by
\begin{eqnarray*}
 \rho(x) & = & \dint_{\R^{*}} (1 - \a(x\to y)) q(x\to y)~dy \\
 & = & \dfrac{1}{2}\dint_{|y| < |x|} (1 - \a(x\to y))\dfrac{1}{|x|}g\left(\dfrac yx\right)~dy + \dfrac{1}{2}\dint_{|y| \geq |x|} (1 - \a(x\to y))\dfrac{|x|}{y^2}g\left(\dfrac xy\right)~dy.
\end{eqnarray*}
Substituting $\e = y/x$ in the first integral and $\e = x/y$ in the second we get,
\begin{eqnarray*}
  \rho(x) & = & \dfrac{1}{2}\dint_{|\e| < 1} (1 - \a(x\to x\e))g(\e)~d\e + \dfrac{1}{2}\dint_{|\e| < 1} (1 - \a(x\to x/\e))g(\e)~d\e \\
 & = & \frac{1}{2}\dint_{r(x)} \left(1 - \dfrac{\pi(x\e)|\e|}{\pi(x)}\right)g(\e)d\e + \frac{1}{2} \dint_{R(x)} \left(1 - \dfrac{\pi(x/\e)}{\pi(x)|\e|}\right)g(\e)d\e
\end{eqnarray*}

Obviously, in any MH algorithm the proposal density plays an important role in terms of convergence. In this case also a good choice of $g$ is needed for faster convergence. However as we shall see in Theorem \ref{thm:geoerg} that the chain is geometric ergodic under an extremely weak restriction on $g$. Hence the discussion on the choices of $g$ is postponed till the end of Section \ref{sec:convergence}.

It is quite straightforward to extend the algorithm to higher dimensions. The variables may be updated either sequentially or jointly. While updating jointly at each iteration, outer dives should be applied to a random number of components (which may be zero) and inner dives to the rest. The Jacobian terms in the acceptance ratios will then be ratios of products of $\e$'s. The algorithm is given in Algorithm \ref{algo:highd} of Section \ref{sec:further}.

\section{Dissimilarity from RWMH}\label{sec:dissimilarity}
It is already seen that the RDMH algorithm \emph{is} a special case of MH class of algorithms. However, it is \emph{not} in any case similar to the random walk type algorithms. The term \emph{multiplicative random walk} (also known as the log-random walk MH) is not new in the statistics literature. However, it has been developed only when the state space is the positive half of the real line as $X_{t+1} = X_t\exp(N_t)$ where $N_t$ are i.i.d following some distributions on the real line (see, Dellaportas and Roberts, 2003, pp. 18 for details and Jasra \emph{et. al.}, 2005, for application). Obviously this reduces to the simple RWMH on $\R$ by observing that $\log X_{t+1} = \log X_t + N_t$. It is useless for problems having entire real line as support because a part of the state space is never visited (i.e the chain becomes \emph{reducible}), that is, positive (negative) initial values restrict the chain to take only positive (respectively negative) values.

The RDMH, however, is developed when the state sapce is entire real line. The term $\exp(N_t)$ above can not be equal to $\epsilon_t$ since $\epsilon_t$ can take both positive and negative values. Hence the RDMH algorithm cannot be considered as a special case of log-random walk MH. For distributions with $(0,\infty)$ as support the obvious way to emply RDMH is to reparametrize by taking logarithm so that the support becomes $\R$.

\section{Convergence Properties}\label{sec:convergence}
Let us denote the kernel the of the RDMH chain by $K(x \to y)$. Important properties like irreducibility and aperiodicity
\cite[see][for definitions]{roberts2004} are satisfied by the RDMH under minor assumption: 

\begin{theorem}\label{thm:ergodicity}
 If for all $0 < \delta <1$, $\inf_{\delta < |\e| < 1} g(\e) > 0$ and $\pi(\cdot)$ is bounded and positive on every compact subset of $\R$, then the chain is $\l$-irreducible (and hence $\pi$-irreducible) and aperiodic.
\end{theorem}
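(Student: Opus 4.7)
Plan. For $\l$-irreducibility, my strategy is to show one-step reachability: for every $x_0 \neq 0$ and every Borel $B$ with $\l(B) > 0$, $K(x_0, B) > 0$. The key observation from \eqref{eqn:rdmh:proposal} is that every $y \neq 0, \pm |x_0|$ can be reached from $x_0$ in a single iteration -- via an inner dive with $\e = y/x_0$ when $0 < |y| < |x_0|$, and via an outer dive with $\e = x_0/y$ when $|y| > |x_0|$. Given such a $B$, the set $\{|y| = |x_0|\}$ is Lebesgue-null, so at least one of $B \cap \{|y| < |x_0|\}$ or $B \cap \{|y| > |x_0|\}$ has positive measure; I assume the inner case, as the outer case is entirely analogous. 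By continuity of measure I can trim this to a subset $B'$ contained in a compact annulus $\{c_1 \leq |y| \leq c_2\}$ with $0 < c_1 < c_2 < |x_0|$ and still $\l(B') > 0$. On $B'$ the induced multiplier $\e = y/x_0$ lies in a compact subset of $\Y$ that is bounded away from both $0$ and $\pm 1$, so the hypothesis on $g$ gives $g(\e) \geq c > 0$; combined with the pointwise positivity of $\pi$ at $x_0$ and on $B'$, the integrand $q(x_0 \to y)\a(x_0 \to y)$ is strictly positive on $B'$, so $K(x_0, B) \geq K(x_0, B') > 0$. $\pi$-irreducibility is then immediate because $\pi > 0$ makes $\pi$ and $\l$ mutually absolutely continuous.

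For aperiodicity, the plan is to show that the rejection mass $\rho(x_0)$ is strictly positive at every $x_0 \neq 0$, so that the one-step kernel contains a genuine atom $\rho(x_0)\delta_{x_0}$. Let $M := \sup_{|y| \leq |x_0|} \pi(y)$, finite by hypothesis, and set $\delta_0 := \min\{1/2,\, \pi(x_0)/(2M)\} > 0$. For $\e$ in the annulus $A_0 := \{\delta_0/2 \leq |\e| \leq \delta_0\}$, a direct bound gives $\pi(x_0\e)|\e|/\pi(x_0) \leq M\delta_0/\pi(x_0) \leq 1/2$, so $A_0 \subset r(x_0)$ and the inner integrand $1 - \pi(x_0\e)|\e|/\pi(x_0)$ is at least $1/2$ on $A_0$. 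Since $g$ is bounded below by a positive constant on $A_0$ by assumption, the inner-dive piece of $\rho(x_0)$ alone is strictly positive, so $K(x_0, \{x_0\}) \geq \rho(x_0) > 0$. A positive kernel atom at every current state is incompatible with any cyclic partition $\{E_0, \ldots, E_{d-1}\}$ with $d \geq 2$: for $x_0 \in E_i$, the atom forces $K(x_0, E_i) > 0$, contradicting the defining property $K(x_0, E_{(i+1)\bmod d}) = 1$ of such a partition.

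The main obstacle is only organizational. One must set up the decomposition of $B$ so that the proposed $\e$ is bounded away from both $0$ and $\pm 1$ -- the two places where the hypothesis on $g$ gives no information -- and one must choose $\delta_0$ in terms of $\pi(x_0)$ and the local supremum $M$ to keep $\pi(x_0\e)|\e|/\pi(x_0)$ bounded below $1$ on a Lebesgue-positive annulus. The final implication, from ``$\rho(x_0) > 0$ for every $x_0 \neq 0$'' to chain-level aperiodicity, is the one step that uses standard general state-space Markov chain theory rather than direct computation, and is the place requiring a little care because singletons have zero Lebesgue mass.
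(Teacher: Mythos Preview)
Your proposal is correct. The irreducibility half is essentially the paper's argument: both of you show one-step reachability by trimming the target set to a compact annulus on which $g$ is bounded below and $\pi$ is bounded and positive, then bound the absolutely continuous part of $K(x_0,\cdot)$ from below. The only organizational difference is that the paper chooses a single annulus $C=\{r\le |u|\le R\}$ with $r<|x_0|<R$ and handles the inner and outer dives simultaneously, whereas you split into the cases $|y|<|x_0|$ and $|y|>|x_0|$ and argue one explicitly; the paper's choice is slightly slicker but the content is the same.

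The aperiodicity arguments are genuinely different. The paper simply observes that one-step reachability of every set of positive Lebesgue measure (just established) already rules out any nontrivial cycle, since from $x\in E_i$ one would have $K(x,E_j)>0$ for every $j$. You instead prove that the rejection probability $\rho(x_0)$ is strictly positive at every $x_0\ne 0$, by exhibiting an explicit annulus $A_0\subset r(x_0)$ on which the inner-dive rejection integrand is at least $1/2$ and $g$ is bounded below. Both routes are standard; the paper's is shorter because it recycles the irreducibility bound, while yours is more self-contained and does not require checking that the cycle classes have positive Lebesgue measure (your contradiction uses only that $x_0\in E_i$ and $K(x_0,\{x_0\})>0$). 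Your closing caveat about null sets is the right place to be careful, and your argument survives it: the cycle relation $K(x,E_{i+1})=1$ holds for $\psi$-almost every $x\in E_i$, and since $\psi(E_i)>0$ there is at least one such $x$, which is all you need.
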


\begin{proof}
 Suppose $x\ne 0$ and $A \in \mathcal B(\R)$ has $\l(A) > 0$. Then there exists a compact set $C = \{u ~: ~r \leq |u| \leq R\}$, such that $\l(A\intersect C) > 0$ where $0 < r < |x| < R < \infty$. Define,
\[ I_x = [-|x|,~|x|] \qquad A^* = A\intersect C \qquad m = \inf_{y\in C} \pi(y) \quad \textrm{ and } \quad M = \sup_{y\in C}\pi(y)\]
and $a = \inf_{r/R < |\e| < 1}g(\e)$. The kernel $K$ of the chain satisfies,
\begin{eqnarray*}
K(x,A) & \geq & \dint_A q(y|x) \min\left\{ \frac{\pi(y)~q(y\to x)}{\pi(x)~q(x\to y)} ,1\right\}dy \\
& \geq & \dint_{A^*\intersect I_x} \frac{1}{2|x|}g(y/x)\min\left\{\frac{\pi(y)}{\pi(x)} \left|\frac{y}{x}\right|,1\right\}dy \\ &  & + \dint_{A^*\intersect I_x^c}\frac{|x|}{2y^2}g(x/y)\min\left\{ \frac{\pi(y)}{\pi(x)} \left|\frac{y}{x}\right|,1\right\}dy \\
& \geq & \dfrac{a}{2R}\min\left\{\dfrac{mr}{MR},1\right\} \l(A^*\cap I_x) + \dfrac{ra}{2R^2}\min\left\{\dfrac{mr}{MR},1\right\} \l(A^*\cap I_x^c) \\
& \geq & c~\l(A\cap C) > 0
\end{eqnarray*}
where $$c = \min\left\{\dfrac{a}{2R} , \dfrac{ra}{2R^2}\right\}\times\min\left\{\dfrac{mr}{MR},1\right\}.$$
Thus we see that, the chain is $\l$-\emph{irreducible}. Further since each measurable set with positive Lebesgue measure can be accessed in a single step implies the chain is \emph{aperiodic}.
\end{proof}
Thus the RDMH chain is \emph{ergodic} and hence by Theorem 4 of \cite{roberts2004},
\begin{equation}\label{eqn:tvconv}
 ||K^n(x,\cdot) - \pi(\cdot)||_{TV} ~\to ~0 \qquad \textrm{as } n \to \infty
\end{equation}
for $\pi-$almost every $x \in \R$. Where , $||\nu_1 - \nu_2||_{TV}$ is the well-known \textit{total variation distance} between two probability measures $\nu_1$ and $\nu_2$, defined as
\begin{equation}\label{eqn:tvnorm}
 ||\nu_1 - \nu_2||_{TV} ~ =~ \sup_A |\nu_1(A) - \nu_2(A)|
\end{equation}
For further properties of the total variation distance see \cite{meyn1993,roberts2004,robert2004} or \cite{liu2008}.
A sufficient condition for \eqref{eqn:tvconv} to hold for all $x$ rather than $\pi$--almost every $x$ is the \emph{Harris recurrence} of the chain. A Markov chain $(X_n)$ is called Harris recurrent if for every $x$ in the state space and for every set $B \in \mathcal B(\R)$ such that $\pi(B) > 0$, 
\[ P( \exists n : X_n \in B | X_0 = x] = 1.\]
Obviously the point \textit{zero} creates a problem in our RDMH algorithm. However, if we remove the single point zero from the state space, then since the chain is already $\pi$-irreducible we use Lemma 7.3 of \cite{robert2004} to conclude that
\begin{corr}
Under the assumptions of Theorem \ref{thm:ergodicity}, the RDMH chain is Harris recurrent on $\R^* = \R - \{0\}$.
\end{corr}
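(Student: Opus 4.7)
The plan is to reduce the claim directly to Lemma 7.3 of \cite{robert2004}, which asserts that any $\pi$-irreducible Metropolis--Hastings chain is Harris recurrent on a set of full $\pi$-measure. The only genuine issue is the absorbing behaviour at the origin, which is why the statement is phrased on $\R^* = \R - \{0\}$ rather than on $\R$. So the proof will consist of checking that, after excising the singleton $\{0\}$, the chain is a legitimate $\pi$-irreducible MH chain on $\R^*$, and then quoting the Lemma.

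First, I would verify that $\R^*$ is genuinely an invariant set for the kernel $K$. Since the proposal density $g$ is supported on $\Y = (-1,1) - \{0\}$, we have $\e \ne 0$ almost surely, and hence for any $x \ne 0$ both candidate values $x\e$ and $x/\e$ lie in $\R^*$. Therefore, starting from $x_0 \in \R^*$, the chain $(X_t)$ remains in $\R^*$ with probability one for all $t \geq 0$. Consequently $K$ restricted to $\R^*$ defines a Markov kernel on $(\R^*, \mathcal{B}(\R^*))$ in the usual way.

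Next, I would transfer the irreducibility conclusion of Theorem \ref{thm:ergodicity} from $\R$ to $\R^*$. The theorem establishes $\l$-irreducibility on $\R$, i.e.\ $K^n(x, A) > 0$ for some $n$ whenever $\l(A) > 0$. Since $\l(\{0\}) = 0$, the same inequality holds when $A$ is any Borel subset of $\R^*$ with positive Lebesgue measure, so the restricted chain is $\l$-irreducible on $\R^*$. Because $\pi$ is absolutely continuous and positive on $\R^*$ (by the positivity assumption of Theorem \ref{thm:ergodicity}), $\l$-irreducibility is equivalent to $\pi$-irreducibility. Moreover, $\pi(\cdot)$ restricted to $\R^*$ is still invariant, since the off-the-origin structure of the proposal kernel and the detailed balance calculation leading to \eqref{eqn:rdmh:proposal} are unaffected by removing a null set.

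Finally, I would invoke Lemma 7.3 of \cite{robert2004}: an MH chain that is $\pi$-irreducible with invariant distribution $\pi$ is Harris recurrent on a set of full $\pi$-measure, and in fact on the entire state space once any absorbing null sets are removed. Applied to our restricted chain on $\R^*$, this yields Harris recurrence on $\R^*$, which is exactly the claim. The only real obstacle I anticipate is the bookkeeping around the absorbing origin; once one recognises that $\{0\}$ is both $\pi$-null and unreachable from any $x \ne 0$ in a single step, the removal of $\{0\}$ is cost-free and the rest is a one-line appeal to the cited lemma.
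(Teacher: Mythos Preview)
Your proposal is correct and follows essentially the same approach as the paper: the paper's entire argument is the observation that after removing the $\pi$-null absorbing point $\{0\}$, the chain is $\pi$-irreducible on $\R^*$ (by Theorem~\ref{thm:ergodicity}) and hence Harris recurrent by Lemma~7.3 of \cite{robert2004}. You have simply made explicit the bookkeeping (invariance of $\R^*$, transfer of irreducibility) that the paper leaves implicit.
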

Thus \eqref{eqn:tvconv} holds for any nonzero $x$, i.e. any nonzero starting value ensures convergence of the chain.

A subset $C$ of $\statesp$ is called \emph{small} if there exists a positive integer $n$, a number $\delta > 0$ and a nontrivial measure $\nu$ such that
\begin{equation}\label{eqn:def:small}
 K^n(x,A) \quad\geq \quad\delta~\nu(A)\qquad \forall x \in C, \quad\forall A \in \mathcal{B}(\statesp)
\end{equation}
We will characterize the small sets for RDMH. In most of the MH algorithms any bounded subset of the state space is small. However this is not the case with RDMH. We first state a result for RDMH kernel useful in characterizing the small sets.
\begin{lemma}\label{lemma:weakconv}
 Suppose $(x_n)$ is a sequence of positive (negative) numbers decreasing (resp. increasing) to zero,  then $K(x_n,\cdot) \stackrel{w}{\longrightarrow} \delta_0$, where $\delta_0$ is the distribution degenerated at zero.
\end{lemma}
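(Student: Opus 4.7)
My plan is to prove weak convergence by the Portmanteau criterion, specifically by showing that $K(x_n, U_\delta^c) \to 0$ for every $\delta > 0$, where $U_\delta = \{y : |y| < \delta\}$. Since $\delta_0$ concentrates at the origin, this is equivalent to weak convergence to $\delta_0$.

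First I would decompose $K(x_n, \cdot)$ into its three natural parts: the atomic rejection mass $\rho(x_n)\delta_{x_n}$, the accepted inner-dive contribution, and the accepted outer-dive contribution. Fix $\delta > 0$ and choose $n$ large enough that $|x_n| < \delta$. Then the rejection mass lives inside $U_\delta$ and contributes $0$ to $K(x_n, U_\delta^c)$. Likewise, an inner-dive proposal has the form $y = x_n \e$ with $|\e| < 1$, so $|y| < |x_n| < \delta$; inner dives cannot escape $U_\delta$ either. This isolates all the mass outside $U_\delta$ in the outer-dive component.

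Next, an outer-dive proposal $y = x_n/\e$ satisfies $|y| > \delta$ exactly when $|\e| < |x_n|/\delta$. Bounding the acceptance probability trivially by $1$, the expression for $K(x_n,\cdot)$ derived in Section \ref{sec:rdmh} gives
\begin{equation*}
K(x_n, U_\delta^c) \;\leq\; \frac{1}{2}\dint_{|\e| < |x_n|/\delta} g(\e)\,d\e.
\end{equation*}
Since $g$ is a probability density on $\Y$, the right-hand side is the $g$-measure of a shrinking neighborhood of $0$, so it tends to $0$ as $n \to \infty$ by continuity of the measure induced by $g$ (equivalently, by dominated convergence applied to $g(\e)\mathbb I(|\e| < |x_n|/\delta)$). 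Hence $K(x_n, U_\delta^c) \to 0$ for every $\delta > 0$, which yields $K(x_n,\cdot) \stackrel{w}{\longrightarrow} \delta_0$. The argument for negative $x_n$ increasing to zero is identical, since only $|x_n|$ enters the bounds.

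The only subtle point is recognizing the asymmetric role of the two dive types: inner dives are automatically trapped near the origin by the inequality $|y| < |x_n|$, so no control on $\pi$ is needed for them, whereas outer dives can in principle propose large $|y|$, but the set of multipliers $\e$ producing such a large jump is precisely the interval $(-|x_n|/\delta, |x_n|/\delta)$ which collapses to $\{0\}$. Thus the whole argument rests only on $g$ being a density, and no additional assumption on $\pi$ (beyond what makes the algorithm well defined) is required.
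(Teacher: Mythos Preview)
Your proof is correct and follows essentially the same approach as the paper's: both arguments observe that for $|x_n|$ small enough the rejection mass and all inner-dive proposals are trapped near the origin, and then bound the remaining outer-dive escape probability by $\tfrac{1}{2}\int_{|\e|<|x_n|/\delta}g(\e)\,d\e\to 0$. The only cosmetic difference is that you phrase this via $K(x_n,U_\delta^c)\to 0$ with an explicit three-part decomposition, whereas the paper checks $K(x_n,(-\infty,y])$ separately for $y<0$ and $y>0$.
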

\begin{proof} Without loss we assume $(x_n) \downarrow 0$. Suppose $y < 0$. Then 
$$K(x_n,(-\infty,y]) ~\leq ~\dfrac{1}{2}\int_{x_n/y}^0 g(\e)d\e ~\to~ 0.$$ Also for $y > 0$, for sufficiently large $n$, $$K(x_n,(y,\infty) ~\leq ~\dfrac{1}{2}\int_{0}^{x_n/y} g(\e)d\e ~\to~ 0,$$ so that $K(x_n,(-\infty,y]) \to 1$. This completes the proof.
\end{proof}

\begin{theorem}\label{thm:smallset}
 Suppose the conditions in Theorem \ref{thm:ergodicity} holds. Then a set $E$ is small \emph{if and only if} its closure, $\bar E$ is a compact subset of $\R^* = \R-\{0\}$.
\end{theorem}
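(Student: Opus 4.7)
The plan is to handle the two directions separately. For the \textbf{if} direction, I would essentially recycle the estimate in the proof of Theorem~\ref{thm:ergodicity}, but with the compact set $C$ fixed in advance. For the \textbf{only if} direction, I would use Lemma~\ref{lemma:weakconv} (and an analog for $|x| \to \infty$), together with an induction via Chapman--Kolmogorov.

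Suppose first that $\bar{E}$ is a compact subset of $\R^{*}$, so $E \subseteq \{x : r \leq |x| \leq R\}$ for some $0 < r \leq R < \infty$. Take $C = \{u : r' \leq |u| \leq R'\}$ with $0 < r' < r$ and $R' > R$; the constants $m = \inf_{C} \pi$, $M = \sup_{C} \pi$, and $a = \inf_{r'/R' < |\e| < 1} g(\e)$ are all strictly positive by the hypotheses of Theorem~\ref{thm:ergodicity}. Since every $x \in E$ lies in the interior of $C$, the same chain of inequalities from that proof yields
\[ K(x, A) \; \geq \; c\,\l(A \cap C) \qquad \text{for all } x \in E \text{ and all } A \in \B(\R), \]
with the constant $c$ depending only on $E$, $g$, and $\pi$, and \emph{not} on $x$. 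Hence $E$ is small with $n = 1$, $\delta = c$, and $\nu = \l(\,\cdot\, \cap C)$.

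For the converse, suppose $E$ is small with constants $n$, $\delta$, $\nu$. I would first observe that $\nu$ cannot put positive mass at $0$: starting from any $x \neq 0$, the proposals $x\e$ and $x/\e$ are again nonzero (since $\e \in \Y$), so $K^{n}(x, \{0\}) = 0$, forcing $\nu(\{0\}) = 0$. Hence there is a compact $B \subset \R^{*}$ with $\nu(B) > 0$, and the small-set inequality gives $\inf_{x \in E} K^{n}(x, B) \geq \delta\,\nu(B) > 0$. Now assume for contradiction that $\bar{E}$ is \emph{not} a compact subset of $\R^{*}$. Then $E$ contains a sequence $x_{k}$ with either $x_{k} \to 0$ or $|x_{k}| \to \infty$. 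I would aim to show $K^{n}(x_{k}, B) \to 0$ in either case, which contradicts the uniform lower bound just derived. The one-step case $K(x_{k}, B) \to 0$ when $x_{k} \to 0$ is immediate from Lemma~\ref{lemma:weakconv} since $0 \notin \bar{B}$. The analogous one-step statement when $|x_{k}| \to \infty$ follows from a direct decomposition of $K$: taking $B \subseteq [-M, M]$, the outer-dive part of $K(x_{k}, B)$ vanishes once $|x_{k}| > M$, the inner-dive part is bounded by $\tfrac{1}{2}\int_{|\e| \leq M/|x_{k}|} g(\e)\,d\e$, and the rejection mass sits at $x_{k} \notin B$ eventually. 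I would then strengthen these to $K^{m}(x_{k}, B) \to 0$ for every fixed $m$ by induction via $K^{m+1}(x_{k}, B) = \int K^{m}(y, B)\,K(x_{k}, dy)$: for $\eta > 0$, choose a shell $D = \{r'' \leq |y| \leq R''\}$ so that the inductive hypothesis gives $K^{m}(y, B) < \eta$ outside $D$, and bound the integral by $\eta + K(x_{k}, D)$; the second term vanishes by the one-step result applied with $D$ in place of $B$.

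The step I expect to be the main obstacle is the induction: the inductive hypothesis really needs to be \emph{uniform} in $y$ outside any compact shell in $\R^{*}$, not merely pointwise along a fixed sequence, because the integration variable $y$ in Chapman--Kolmogorov ranges freely. I would therefore formulate the hypothesis as $\sup\{K^{m}(y, B) : |y| \leq r'' \text{ or } |y| \geq R''\} \to 0$ as $r'' \to 0$ and $R'' \to \infty$ from the outset, and rely on the one-step bounds (which are already uniform in $y$ outside such shells) to carry the induction through both limits simultaneously.
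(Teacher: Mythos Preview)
Your proposal is correct and follows the same architecture as the paper's proof: recycle the minorization from Theorem~\ref{thm:ergodicity} for the ``if'' direction, and for the ``only if'' direction combine Lemma~\ref{lemma:weakconv} with a Chapman--Kolmogorov induction to show $K^{m}(x_k,B)\to 0$ whenever $x_k\to 0$ or $|x_k|\to\infty$. The paper compresses the induction into a single line, writing $K^{m}(x_n,A)=\int K^{m-1}(y,A)\,K(x_n,dy)\to\mathbb{I}(0\in A)$ ``by Lemma~\ref{lemma:weakconv}'' --- this tacitly uses that $y\mapsto K^{m-1}(y,A)$ is continuous at $0$ (the inductive hypothesis) together with a Portmanteau-type argument --- and dispatches the $|x|\to\infty$ case in one sentence without details; your explicit uniform hypothesis and $\eta$-splitting are precisely the rigorous unpacking of those steps, and your correctly identified need for uniformity in $y$ (rather than convergence along a fixed sequence) is exactly what makes the paper's one-line induction work.
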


\begin{proof}
 Suppose first that $\bar E$ is compact subset of $\R^*$. Then $$r := \frac12\times\inf\{|x| : x \in E\} > 0 \quad \textrm{ and }\quad R = 2\times\sup\{|x| : x \in E\} < \infty.$$
So letting $C = \{u ~: ~ r \leq |u| \leq R\}$, it is seen from the proof of theorem \ref{thm:ergodicity} that,
\[K(x,A) \geq c~\l(A\cap C), \qquad \forall x \in E,~\forall A\in\mathcal B(\R) \]
Since $\l_C(A) := \l(A\cap C)$ is a nonzero measure on $(\R,\mathcal B(\R))$, this shows that $E$ is small.

Now suppose that $E$ is small. Clearly $\pm\infty$ cannot be limit points of $E$ since for any fixed $n$ and bounded $A$, $K^n(x,A) \to 0$ as $|x| \to \infty$. We shall also show that zero cannot be a limit point of $E$. This will show that $\bar E$ is compact subset of $\R^*$. So suppose on the contrary that zero is a limit point of $E$. Then there exists a sequence $(x_n)$ in  $E$ which monotonically converges to zero. 
Hence for any $m \in \mathbb{N}$ and any measurable set $A$, by Lemma \ref{lemma:weakconv}, 
\[ K^m(x_n,A) = \dint_{\R^*} K^{m-1}(y,A)K(x_n,dy) \longrightarrow \mathbb I(0 \in A) \quad \textrm{ as } n \to \infty.\]
So that \eqref{eqn:def:small} cannot hold for all $x\in E$ and all $A$ contradicting the assumption that $E$ is small.
\end{proof}

We now turn towards geometric ergodicity. An irreducible Markov kernel $K$ (irreducible with respect to some $\sigma-$finite measure $\nu$) with invariant distribution $\pi(\cdot)$ is said to be geometric ergodic if 
\[ \sup_{A \in \mathcal B(\R)}|K^n(x,A) - \pi(A)| \leq M(x)\rho^n, \quad\forall n\in \mathbb N\]
 for some $\rho < 1$, where $M(x) <\infty$, for $\pi-$a.e. $x \in \R$.

Geometric ergodicity is important in MCMC applications for the CLT of ergodic averages
\[ \hat{h}_N = \dfrac{1}{N}\sum_{i=1}^{N}h(X_i), \qquad N \in \mathbb N\]
of some function $h$ evaluated at each state of the Markov chain $(X_i)$. Corollary 2.1 of \cite{roberts1997} \cite[based on work of][]{kipnis1986} states that if a Markov kernel $P$ is geometric ergodic and reversible then for any function $h$ on the state space such that $\mathrm E_\pi{|h|^2} < \infty$
\[\sqrt{N}(\hat{h}_N - \pi(h)) \stackrel{w}{\longrightarrow} N(0,\sigma^2_h)\qquad \textrm{ as } N\to \infty\]
Such a CLT easily may not hold if the kernel is not geometric ergodic \cite[see][for examples]{roberts1999} or Section \ref{sec:appli:thick} of this article. Geometric ergodicity has multifarious usefulness discussed in \cite{jones2001} and \cite{roberts1998}.

To show that RDMH chain is geometrically ergodic, we put the following restriction on $\pi$. \\
\textbf{Assumption (A1).} For some $1 < p \leq \infty$
\begin{equation} \label{A1}
\lim_{|x|\to\infty} \pi(x)/\pi(x\e) ~= ~|\e|^p
\end{equation}
where the notation $|\e|^\infty$ should be interpreted as 0 for each $\e \in \Y$.
Notice that this is true for most of the posterior densities in Bayesian literature where MCMC finds extremely high applications. The condition (A1) given above is basically a regularly varying type restriction on $\pi$. Further discussion on regularly varying functions can be found in \citet[pp. 275--284]{feller1971}.

Recall that for each $x \ne 0$, the rejection probability is
\[\rho(x) = \frac{1}{2}\dint_{r(x)} \left(1 - \dfrac{\pi(x\e)|\e|}{\pi(x)}\right)g(\e)d\e + \frac{1}{2} \dint_{R(x)} \left(1 - \dfrac{\pi(x/\e)}{\pi(x)|\e|}\right)g(\e)d\e\]

We now give a bound on $\rho(x)$ in the following lemma.
\begin{lemma}\label{lemma:rhox}
 Assume (A1). Then 
\begin{eqnarray*}
 \rho(x) &\to & \frac{1}{2}\dint_\Y (1 - |\e|^{p-1})g(\e)d\e, \quad \textrm{ as } |x| \to \infty \\
 \rho(x) &\to & \frac{1}{2}\dint_\Y (1 - |\e|)g(\e)d\e, \quad \textrm{ as } x \to 0
\end{eqnarray*}
\end{lemma}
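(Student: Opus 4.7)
The strategy is to apply the dominated convergence theorem separately to the two integrals defining $\rho(x)$, writing each over the whole of $\Y$ by inserting the indicator of the appropriate rejection region. On $r(x)$ the bracket $1-\pi(x\e)|\e|/\pi(x)$ lies in $[0,1]$ and on $R(x)$ the bracket $1-\pi(x/\e)/(\pi(x)|\e|)$ also lies in $[0,1]$, so in both cases the integrand is dominated by $g(\e)$, which is integrable on $\Y$. Hence it suffices to identify the pointwise limits (in $\e$) of the integrands, including the indicators of $r(x)$ and $R(x)$, and then invoke DCT.

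\textbf{Limit as $|x|\to\infty$.} Fix $\e\in\Y$. Assumption (A1) gives directly
\[\frac{\pi(x\e)|\e|}{\pi(x)}=\frac{|\e|}{\pi(x)/\pi(x\e)}\longrightarrow |\e|^{1-p}.\]
For the outer ratio set $y=x/\e$, so $|y|\to\infty$ and $x=y\e$; (A1) then yields
\[\frac{\pi(x/\e)}{\pi(x)|\e|}=\frac{1}{|\e|}\cdot\frac{\pi(y)}{\pi(y\e)}\longrightarrow |\e|^{p-1}.\]
Since $|\e|<1$ and $p>1$, the first limit exceeds $1$ while the second is less than $1$; hence for each fixed $\e$, eventually $\e\notin r(x)$ and $\e\in R(x)$, i.e.\ $\mathbb{I}_{r(x)}(\e)\to 0$ and $\mathbb{I}_{R(x)}(\e)\to 1$. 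DCT then kills the first integral in $\rho(x)$ and drives the second to $\int_\Y(1-|\e|^{p-1})g(\e)\,d\e$, giving (after multiplying by $1/2$) the first claim.

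\textbf{Limit as $x\to 0$.} For $\e\in\Y$ fixed, both $x\e\to 0$ and $x/\e\to 0$, so invoking continuity of $\pi$ at the origin (a mild regularity that is implicit throughout), $\pi(x\e)/\pi(x)\to 1$ and $\pi(x/\e)/\pi(x)\to 1$. Hence $\pi(x\e)|\e|/\pi(x)\to|\e|<1$ and $\pi(x/\e)/(\pi(x)|\e|)\to 1/|\e|>1$, so now the roles of the two indicators are reversed: $\mathbb{I}_{r(x)}(\e)\to 1$ and $\mathbb{I}_{R(x)}(\e)\to 0$. The same DCT argument then produces $\rho(x)\to\tfrac12\int_\Y(1-|\e|)g(\e)\,d\e$.

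\textbf{Anticipated obstacle.} The one non-automatic ingredient is for the second limit: the standing hypotheses on $\pi$ (boundedness and positivity on compacts, plus (A1)) do not by themselves force continuity at the origin, so some mild additional regularity of $\pi$ near $0$ is implicitly in use. A small bookkeeping point is the convention $|\e|^\infty=0$ in (A1): plugging this in makes the outer-limit integrand equal $g(\e)$, and the first conclusion then reduces to $\rho(x)\to 1/2$, which is consistent and so requires no separate treatment.
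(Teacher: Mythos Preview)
Your argument is correct and follows essentially the same route as the paper: write the two integrals over all of $\Y$ with indicators, observe that the bracketed terms lie in $[0,1]$ so $g$ dominates, determine the pointwise limits of the indicators (equivalently, that $r(x)\to\emptyset$, $R(x)\to\Y$ as $|x|\to\infty$ and the reverse as $x\to 0$), and apply DCT. The paper's proof is a one-line sketch of exactly this; you have simply filled in the computations.

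Your flagged obstacle is well taken. The lemma as stated assumes only (A1), yet the $x\to 0$ limit does require $\pi(x\e)/\pi(x)\to 1$ and $\pi(x/\e)/\pi(x)\to 1$, which follows from continuity and positivity of $\pi$ at the origin but not from (A1) alone. The paper glosses over this point here, though continuity of $\pi$ is explicitly assumed later in Theorem~\ref{thm:geoerg} where the lemma is actually used, so the omission is harmless in context.
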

\begin{proof}
 Notice that as $|x|\to\infty$, $r(x) \to \phi,~ R(x) \to \Y$ and as $x\to 0$, $R(x) \to \phi, ~r(x)\to\Y$. Hence the result follows from dominated convergence theorem.
\end{proof}

Now we state a helpful result without proof.
\begin{lemma}\label{lemma:silly}
 Fix $p > 1$. For each $\e\in \Y$ and $s \in (0,1)$ define 
\begin{eqnarray*}
\varphi(s,\e) & = &  |\e|^s + |\e|^{1-s} - |\e| \\
\psi_p(s,\e) & = &  |\e|^{ps} + |\e|^{p-ps-1} - |\e|^{p-1}
\end{eqnarray*}
With $\psi_{\infty}(s,\e) \equiv 0$. Then
\begin{itemize}
 \item[(a)] $\varphi(s,\e) < 1$ for all $\e \in \Y$ and $s \in (0,1)$.
\item[(b)] $\psi_p(s,\e) < 1$ for all $\e \in \Y$ and $0 < s < 1/2 - 1/(2p)$.
\end{itemize}
\end{lemma}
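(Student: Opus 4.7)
The plan is to fix $t = |\e| \in (0,1)$ throughout, reduce each inequality to a one-variable problem in $s$, and exploit strict convexity of $s \mapsto t^{as}$ together with the fact that $\ln t < 0$. Both parts have the same flavor: the function attains value $1$ at a boundary point and is strictly monotone on the interior of the relevant interval, forcing strictness.

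For part (a), I set $f(s) = t^s + t^{1-s} - t$ with $t \in (0,1)$ fixed. Direct differentiation gives
\begin{equation*}
f''(s) = (\ln t)^2\bigl(t^s + t^{1-s}\bigr) > 0,
\end{equation*}
so $f$ is strictly convex on $\mathbb R$. Since $f(0) = f(1) = 1$, strict convexity on $[0,1]$ immediately yields $f(s) < 1$ for every $s \in (0,1)$, which is exactly $\varphi(s,\e) < 1$. (Equivalently, one can note that the unique critical point is at $s = 1/2$, where $f(1/2) = 2\sqrt t - t = 1 - (1-\sqrt t)^2 < 1$.)

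For part (b), the case $p = \infty$ is trivial since $\psi_\infty \equiv 0$. For $1 < p < \infty$ I set $h(s) = t^{ps} + t^{p-ps-1} - t^{p-1}$. A quick derivative computation gives
\begin{equation*}
h'(s) = p\,(\ln t)\bigl(t^{ps} - t^{p-ps-1}\bigr).
\end{equation*}
On the range $0 < s < \tfrac{1}{2} - \tfrac{1}{2p} = \tfrac{p-1}{2p}$ we have $ps < p - ps - 1$; combined with $t \in (0,1)$ this gives $t^{ps} > t^{p-ps-1}$, and together with $\ln t < 0$ we conclude $h'(s) < 0$. So $h$ is strictly decreasing on $\bigl[0,\tfrac{p-1}{2p}\bigr)$, and since $h(0) = 1 + t^{p-1} - t^{p-1} = 1$, the desired strict bound $\psi_p(s,\e) = h(s) < 1$ follows.

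There isn't really a hard step here: both inequalities collapse to convexity or monotonicity after spotting that the boundary value is $1$. The only mild care needed is noting that the critical point $s^\ast = (p-1)/(2p)$ coincides with the declared upper endpoint of the $s$-interval, which is precisely what guarantees strict monotonicity (rather than just unimodality) throughout the open interval asserted in the statement, and separately handling $p = \infty$ by convention.
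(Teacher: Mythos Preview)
Your proof is correct. The paper states this lemma explicitly \emph{without proof}, so there is nothing to compare against; your convexity argument for (a) and monotonicity argument for (b) fill that gap cleanly and with the minimal machinery one would expect.
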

As discussed in Theorem 15.0.1 of \cite{meyn1993}, the RDMH chain is geometric ergodic if and only if for some small set $E$, some function $V : \R^* \to [1,\infty)$ which is finite at least for one $x$, $\gamma < 1$ and some $b_E < \infty$, the \emph{geometric drift condition} holds:
\begin{equation}\label{eqn:geodrift}
 KV(x) \leq \gamma V(x) + b_E\mathbb{I}(x\in E),
\end{equation}
where $KV(x) = \dint_{\R^*}K(x\to y)V(y)dy$. In our case, we know any compact set of $\R^*$ is small. So if we can show for some continuous $V : \R^* \to [1,\infty)$ which is bounded on every compact subset of $\R^*$, the following conditions hold:
\begin{equation}\label{eqn:limsupcond}
 \limsup_{|x| \to \infty } \dfrac{KV(x)}{V(x)} < 1 \quad\textrm{ and } \quad\limsup_{x\to 0 } \dfrac{KV(x)}{V(x)} < 1
\end{equation}
then we can choose a number $\gamma < 1$ and a small set $E = \{ x : r \leq |x| \leq R \}$ for some $0 < r < R < \infty$, such that $KV(x) < \gamma V(x)$ for all $x \notin E$. Also $b_E = \sup_{x\in E} K V(x) < \infty$ since $V$ is bounded on $E$. Hence we see that \eqref{eqn:geodrift} holds.

We now state and prove the most important theorem of this section.
\begin{theorem}\label{thm:geoerg}
 Suppose that conditions in Theorem \ref{thm:ergodicity} holds together with continuity of $\pi(x)$ and (A1). Further assume the following: for some $s_0 \in (0,1)$, 
\begin{equation}\label{eqn:regularity_g}
 \dint_{-1}^{1} |\e|^{-s_0}g(\e)~d\e ~<~\infty
\end{equation}
Then the chain is geometrically ergodic.
\end{theorem}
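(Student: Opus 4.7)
The plan is to verify the geometric drift condition \eqref{eqn:geodrift} by producing a test function $V$ satisfying the two limits in \eqref{eqn:limsupcond}, which -- combined with Theorem~\ref{thm:smallset} and the Meyn--Tweedie criterion already sketched above the theorem -- yields geometric ergodicity. Guided by the shape of Lemma~\ref{lemma:silly}, I would try the two-sided polynomial
\[
V(x) \;=\; 1 + |x|^{a} + |x|^{-b},
\]
with $a \in (0,\min\{s_0,(p-1)/2\})$ (any $a\in(0,s_0)$ if $p=\infty$) and $b\in(0,\min\{s_0,1\})$. This $V$ is continuous, $\ge 1$, and bounded on every compact subset of $\R^{*}$; moreover $KV(x)<\infty$ for every $x\neq 0$ because the only potentially singular integrands $|x|^{-b}|\e|^{-b}$ and $|x|^{a}|\e|^{-a}$ are $g$-integrable by \eqref{eqn:regularity_g} thanks to $a,b\le s_0$.

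For $|x|\to\infty$, assumption (A1) together with the change of variable $u=x/\e$ used in Lemma~\ref{lemma:rhox} gives the pointwise limits of the acceptance probabilities: the inner probability $\min\{\pi(x\e)|\e|/\pi(x),1\}\to 1$ and the outer one $\min\{\pi(x/\e)/(\pi(x)|\e|),1\}\to|\e|^{p-1}$. Since $V(x\e)/V(x)\to|\e|^{a}$ and $V(x/\e)/V(x)\to|\e|^{-a}$ (the $|x|^{-b}$ and constant parts of $V$ are negligible against $V(x)\sim|x|^{a}$), a dominated convergence argument -- with dominants $|\e|^{-a}$ and $|\e|^{p-1-a}$, both $g$-integrable since $a\le s_0$ and $a<p-1$ -- yields
\[
\lim_{|x|\to\infty}\frac{KV(x)}{V(x)} \;=\; \frac{1}{2} + \frac{1}{2}\int_{\Y}\psi_{p}(a/p,\e)\,g(\e)\,d\e.
\]
As $a/p < 1/2-1/(2p)$, Lemma~\ref{lemma:silly}(b) gives $\psi_{p}(a/p,\e)<1$ pointwise, and since $g$ is a probability density the integral is strictly below $1$. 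The case $p=\infty$ is even easier: the outer acceptance probability vanishes and the limit becomes $\tfrac{1}{2}+\tfrac{1}{2}\int_{\Y}|\e|^{a}g(\e)\,d\e<1$.

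For $x\to 0$, continuity and positivity of $\pi$ give $\pi(x\e)/\pi(x)\to 1$, so the inner acceptance probability tends to $|\e|$ and the outer to $1$. Now $V(x)\sim|x|^{-b}$, and the analogous DCT computation gives
\[
\lim_{x\to 0}\frac{KV(x)}{V(x)} \;=\; \frac{1}{2} + \frac{1}{2}\int_{\Y}\varphi(b,\e)\,g(\e)\,d\e,
\]
which is strictly less than $1$ by Lemma~\ref{lemma:silly}(a). Having verified both limsup bounds, I would pick $\gamma<1$ and $0<r<R<\infty$ so that $KV(x)\le\gamma V(x)$ for $|x|\notin[r,R]$; the set $E=\{x:r\le|x|\le R\}$ is small by Theorem~\ref{thm:smallset}, $\sup_{x\in E}KV(x)<\infty$ because $V$ is bounded on $E$ with finite transition integral, and \eqref{eqn:geodrift} follows.

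The principal obstacle is the simultaneous juggling of constraints on the exponents: $a$ must lie in the Lemma~\ref{lemma:silly} window $(0,(p-1)/2)$ to beat the outer-dive contribution at infinity, yet small enough that $|\e|^{-a}$ and $|\e|^{p-1-a}$ are $g$-integrable (supplying DCT dominants) and with $a\le s_0$; analogous constraints apply to $b$. Producing a single pair of exponents meeting everything is exactly what the windows $(0,\min\{s_0,(p-1)/2\})$ and $(0,\min\{s_0,1\})$ achieve, and the non-emptiness of these intervals is precisely why hypothesis \eqref{eqn:regularity_g}, together with (A1) and $p>1$, suffices.
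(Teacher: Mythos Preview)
Your argument is correct and follows the same overall architecture as the paper: verify the two $\limsup$ bounds in \eqref{eqn:limsupcond} using the decomposition \eqref{eqn:kernratio}, Lemma~\ref{lemma:rhox}, and Lemma~\ref{lemma:silly}, then invoke Theorem~\ref{thm:smallset} and the Meyn--Tweedie drift criterion. The difference lies in the choice of drift function. The paper takes a single exponent $s<\min\{s_0,1/2-1/(2p)\}$ and a piecewise $V$ that equals $c_1\pi(x)^{-s}$ for $|x|>1$ and $c_2^{\pm}|x|^{-s}$ for $0<|x|\le 1$; because $V$ is built from $\pi$ itself, bounds such as $V(x/\e)/V(x)\le |\e|^{-s}$ on $A(x)$ fall out directly from the definition of the acceptance region. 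Your $V(x)=1+|x|^{a}+|x|^{-b}$ is purely polynomial, decouples the two tail exponents, and avoids the patching constants $c_1,c_2^{*},c_2^{**}$; the price is that your DCT dominants must come from the universal bounds $V(x\e)/V(x)\le C|\e|^{-b}$ and $V(x/\e)/V(x)\le C|\e|^{-a}$, which is exactly why both $a$ and $b$ must lie below $s_0$. The identity $\psi_p(a/p,\e)=|\e|^{a}+|\e|^{p-1-a}-|\e|^{p-1}$ shows the two choices are calibrated to one another via $a=ps$, and your constraint $a<(p-1)/2$ is precisely the paper's $s<1/2-1/(2p)$.

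One small imprecision: the pair you call ``dominants $|\e|^{-a}$ and $|\e|^{p-1-a}$'' mixes a genuine DCT envelope with a pointwise limit. The envelopes you actually need for the inner and outer integrals are (up to a constant) $|\e|^{-b}$ and $|\e|^{-a}$, both $g$-integrable by \eqref{eqn:regularity_g} since $a,b<s_0$; the quantity $|\e|^{p-1-a}$ is the \emph{limit} of the outer integrand, whose integrability (needed so the limit is finite) follows from $a<p-1$. With that clarification the dominated-convergence step is clean in both regimes and your proof stands.
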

\begin{proof}
 In view of discussion preceding the statement of the theorem we only need to show \eqref{eqn:limsupcond}. Fix $0 < s < \min\{s_0,1/2-1/(2p)\}$. Then by \eqref{eqn:regularity_g}
\begin{equation}\label{eqn:regulariry_g2}
 \dint_{-1}^{1} |\e|^{-s}g(\e)~d\e ~<~\infty
\end{equation}
Now Notice that, for each $x \ne 0$, and any function $V:\R^* \to [1,\infty)$
\begin{equation}\label{eqn:kernratio}
\begin{split}
 \dfrac{KV(x)}{V(x)} & = \frac{1}{2}\dint_{a(x)}g(\e)\frac{V(x\e)}{V(x)}d\e + \frac{1}{2}\dint_{A(x)}g(\e)\frac{V(x/\e)}{V(x)}d\e \\
& +  \frac{1}{2}\dint_{r(x)}g(\e)\frac{\pi(x\e)}{\pi(x)}|\e|\frac{V(x\e)}{V(x)}d\e + \frac{1}{2}\dint_{R(x)}g(\e)\frac{\pi(x/\e)}{\pi(x)|\e|}\frac{V(x/\e)}{V(x)}d\e + \rho(x)
\end{split}
\end{equation}
Choose positive constants $c_1$, $c^*_2$ and $c^{**}_2$ such that the function
\[ V(x) = c_1\pi(x)^{-s}\mathbb{I}(|x| > 1) + c^*_2 x^{-s}\mathbb{I}(0 < x \leq 1) + c^{**}_2 (-x)^{-s}\mathbb{I}(-1 \leq x < 0)\]
is continuous\footnote{The following choices do the job: $\sup_{|x| > 1} \pi(x)^s < c_1 < \infty$, $c_2^* = c_1\pi(1)^{-s}$ and $c_2^{**} = c_1\pi(-1)^{-s}$} and $V(x) \geq 1$ for all $x \ne 0$ and let $c_2 = \min\{c_2^*,~c_2^{**}\}$ and $C_2 = \max\{c_2^*,~c_2^{**}\}.$ We now work with this $V$ to show \eqref{eqn:limsupcond}

\underline{\emph{Case I}:} Suppose $|x| \to\infty$.  Then assumption (A1) implies that $A(x) \to \phi$ and $r(x) \to \phi$. Notice in this case
\[ \dfrac{V(x/\e)}{V(x)} ~= ~ \left(\dfrac{\pi(x)}{\pi(x/\e)}\right)^s ~\leq~ |\e|^{-s} \qquad\forall ~\e \in A(x)\]
Hence by \eqref{eqn:regulariry_g2} the second integral in \eqref{eqn:kernratio} converges to zero. Also, since $\forall ~\e \in r(x)$
\[ \begin{split}
\dfrac{\pi(x\e)}{\pi(x)}|\e|\dfrac{V(x\e)}{V(x)} & \leq \max\left\{ \left(\dfrac{\pi(x\e)}{\pi(x)}\right)^{1-s}|\e|,~\dfrac{\pi(x\e)}{\pi(x)^{1-s}}|\e||x|^{-s}|\e|^{-s}\dfrac{c_1}{c_2}\right\}\\
& \leq \max\left\{|\e|^s, M^s\dfrac{c_1}{c_2}\right\}\end{split},\]
 where $M = \sup\pi(x)$, the third integral in \eqref{eqn:kernratio} also converges to zero. Further since $V(x\e)/V(x) \to |\e|^{ps}$
it follows from \eqref{eqn:kernratio} and Lemma \ref{lemma:rhox} that 
\begin{eqnarray*}
 \limsup_{|x| \to \infty} \dfrac{KV(x)}{V(x)} & \leq & (1/2)\dint_{-1}^1|\e|^{ps}g(\e)d\e + (1/2)\dint_{-1}^1|\e|^{p-ps-1}g(\e)d\e  \\
 & &  + (1/2)\dint_{-1}^1(1-|\e|^{p-1})g(\e)d\e \\
& = & \frac{1}{2}\dint_{-1}^1\psi_p(s,\e)g(\e)d\e + \frac{1}{2}\\
& < & 1 \qquad \textrm{ by Lemma \ref{lemma:silly}}
\end{eqnarray*}

\underline{\emph{Case II}:} Suppose now that $x \to 0$. In this case $a(x) \to \phi$ and $\R(x)\to\phi$. Also $V(x\e)/V(x) \leq \frac{C_2}{c_2}|\e|^{-s}$ for all $|x| < 1$ implies that the first integral in \eqref{eqn:kernratio} converges to zero by \eqref{eqn:regulariry_g2}. Also  $\forall~ |x| < 1$ and $\forall~\e \in R(x)$
\[ \dfrac{\pi(x/\e)}{\pi(x)|\e|}\dfrac{V(x/\e)}{V(x)} \leq \max\left\{ |\e|^s,\dfrac{\pi(x/\e)^{1-s}}{\pi(x)}|x|^s|\e|^{s-1}\dfrac{C_2}{c_1} \right\} \leq \max\left\{ |\e|^s,m^{-s}\dfrac{C_2}{c_1}\right\},\]
where $m = \inf_{|x| < 1} \pi(x) > 0$, so that the fourth integral in \eqref{eqn:kernratio} converges to zero. Hence by continuity of $\pi(\cdot)$ at zero, 
\begin{eqnarray*}
 \limsup_{x\to 0}\dfrac{KV(x)}{V(x)} & \leq & \frac{1}{2}\dint_{-1}^1|\e|^{s}g(\e)d\e + \frac{1}{2}\dint_{-1}^1|\e|^{1-s}g(\e)d\e + \frac{1}{2}\dint_{-1}^1(1-|\e|)g(\e)d\e\\
& = & \frac{1}{2}\dint_{-1}^1\varphi(s,\e)g(\e)d\e + \frac12 \\
& < & 1 \qquad \textrm{ by Lemma \ref{lemma:silly}}
\end{eqnarray*}
This completes the proof.
\end{proof}

\textbf{Remark:} It is conjectured in \cite{atchade2007} that an MH chain is geometric ergodic if the rejection probability is bounded away from 1. They have proved the result with an additional assumption that the continuous part of the MH kernel, i.e. \mbox{$\a(x\to y)q(x\to y)dy$} which is an operator on $L^2(\pi)$, is compact. Unfortunately this extra assumption does not hold for RDMH (along with most of the MH algorithms). Had the conjecture been proved, we could have claimed readily that RDMH is geometrically ergodic by using Lemma \ref{lemma:rhox}. This would not require the extra assumption \eqref{eqn:regularity_g} on $g$.

The class of densities satisfying (A1) together with the assumptions of Theorem 1 and continuity is quite large. This class obviously includes the following classes:
\begin{enumerate}
 \item The class of thick-tailed densities $\pi(x) \sim \dfrac{1}{p(x)^m}$ as $|x| \to \infty$ where $p(x)$ is a polynomial satisfying $p(x) > 0$ for all sufficiently large $|x|$. For example, the $t$-densities fall in this class.
\item The class of densities which are equally log-concave in the two tails, i.e., for some $M > 0$ and some $\a > 0$,
\begin{equation}\label{eqn:stronglogconcave}
 |y| > |x| > M \quad \imply \quad \log\pi(x) - \log\pi(y) \geq \a (|y| - |x|)
\end{equation}
This is a stronger version of \eqref{eqn:logconcave} for \eqref{eqn:stronglogconcave} implies \eqref{eqn:logconcave}. Notice that for these densities, $p = \infty$ in (A1). Examples of such densities are the normal densities and their mixtures, double exponential density etc. 

\item The class of densities of the form $h(x)\exp(-\kappa\sqrt[m]{|x-\theta|})$ where $m > 1$.
\end{enumerate}

In some problems, however, the target $\pi(x)$ is log-concave in the tails but the rates at which $\pi(x)$ converges to zero are not same for the two tails. One example of such densities is  $f(x) = \exp(x - \exp(x))$. Notice that if $Y$ follows the standard exponential distribution $\mathcal{E}xp(1)$ then $X = \log Y$ has density $f(x)$. It can be seen that $f(x)/f(x\e) \to 0$ holds for each $\e > 0$, as $|x| \to \infty$ and also for each $\e < 0$ and as $x \to \infty$. But $f(x)/f(x\e) \to \infty$ if $\e < 0$ and $x\to -\infty$. We assume for these kind of densities exactly one tail dominates, i.e. exactly one the following is true for each $\e \in (-1,0)$.
\begin{eqnarray}
\lim_{x\to-\infty} \pi(x)/\pi(x\e) \to \infty \quad \textrm{and} \quad \lim_{x\to\infty} \pi(x)/\pi(x\e) \to 0 \label{eqn:A2} \\
\lim_{x\to-\infty} \pi(x)/\pi(x\e) \to 0 \quad \textrm{and} \quad \lim_{x\to\infty} \pi(x)/\pi(x\e) \to \infty \label{eqn:A2'}
\end{eqnarray}
Notice that it is sufficient to work with \eqref{eqn:A2} because if \eqref{eqn:A2'} holds for a target $\pi$, then $\pi(-x)$ satisfies \eqref{eqn:A2}.
For these kind of densities (A1) does not hold. However, the next theorem assures that the RDMH chain is still geometric ergodic. The proof is along the line of Theorem \ref{thm:geoerg} and so we just present a sketch.

\begin{theorem}
 Suppose $\pi$ is continuous and the assumptions in Theorem \ref{thm:ergodicity} holds. Suppose further that $\pi(x)$ satisfies \eqref{eqn:A2} and that $g$ satisfies the regularity condition \eqref{eqn:regularity_g}. Then the RDMH chain is geometric ergodic.
\end{theorem}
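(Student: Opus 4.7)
The plan is to keep the very same drift function $V$ and exponent $s\in(0,\min\{s_0,1/2\})$ used in Theorem \ref{thm:geoerg} and to verify \eqref{eqn:limsupcond} directly from the decomposition \eqref{eqn:kernratio}. The near-zero part of \eqref{eqn:limsupcond} depends only on the behaviour of $\pi$ near $0$, so Case II of the proof of Theorem \ref{thm:geoerg} carries over verbatim, and all the work is in the $|x|\to\infty$ part. Because \eqref{eqn:A2} is intrinsically asymmetric in the two tails, I would split this into the two subcases $x\to+\infty$ and $x\to-\infty$.

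For $x\to+\infty$ my first step is to identify the limiting shapes of the four regions. Applying \eqref{eqn:A2} for $\e\in(-1,0)$ and the log-concavity of the right tail of $\pi$ for $\e\in(0,1)$, one reads off $a(x)\to\Y$, $r(x)\to\phi$, $A(x)\to(-1,0)$, and $R(x)\to(0,1)$. The four integrand envelopes from the proof of Theorem \ref{thm:geoerg} (namely $V(x\e)/V(x)\leq|\e|^{s}$ on $a(x)$, $V(x/\e)/V(x)\leq|\e|^{-s}$ on $A(x)$, and the corresponding $|\e|^{s}$, $|\e|^{-s}$ bounds for the weighted integrands on $r(x)$ and $R(x)$) are inherited since they follow purely from the definitions of those regions, and all are dominated by an integrable function of $\e$ via \eqref{eqn:regularity_g}. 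Pointwise, every integrand now vanishes: the ratios $\pi(x)/\pi(x\e)$ and $\pi(x)/\pi(x/\e)$ (whichever is relevant in each region) tend to $0$ by \eqref{eqn:A2} for opposite-sign $\e$ and by tail log-concavity for same-sign $\e$, which drives both the $V$-ratios and both rejection-weighted integrands to $0$. Dominated convergence then sends all four integrals in \eqref{eqn:kernratio} to $0$, while $\rho(x)\to\tfrac{1}{2}\int_{0}^{1}g(\e)\,d\e<\tfrac12$, so $\limsup_{x\to+\infty}KV(x)/V(x)<1$. The case $x\to-\infty$ is the mirror image with $a(x)\to(0,1)$, $r(x)\to(-1,0)$, $A(x)\to\phi$ and $R(x)\to\Y$; the same envelopes and pointwise estimates go through, and now $\rho(x)\to\tfrac12+\tfrac12\int_{-1}^{0}g(\e)\,d\e<1$. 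Together with Case~II this verifies \eqref{eqn:limsupcond}, and geometric ergodicity follows from the drift criterion exactly as in Theorem \ref{thm:geoerg}.

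The main obstacle I anticipate is the loss of the clean closed-form limits supplied by Lemmas \ref{lemma:rhox} and \ref{lemma:silly}: under \eqref{eqn:A2} the inner-dive contribution no longer has a nontrivial limit of the form $\tfrac12\int|\e|^{ps}g(\e)\,d\e$ but has to be driven to zero outright, and this requires $V(x\e)/V(x)\to 0$ for \emph{same-sign} $\e$ as well (e.g.\ $\e\in(0,1)$ when $x\to+\infty$), which is not supplied by \eqref{eqn:A2} itself and has to be extracted from the log-concavity of $\pi$ in each tail that is implicit in the class of target densities at which the theorem is aimed. A secondary technical nuisance is the corner $|x\e|\leq 1$ in the drift-function envelope, but as in Theorem \ref{thm:geoerg} this contributes a term with a benign $|x|^{-s}$ factor that vanishes in the limit and therefore does not affect the final $\limsup$.
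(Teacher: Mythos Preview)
Your proposal is correct and follows essentially the same route as the paper's own proof: the same drift function $V$ and exponent $s$ from Theorem~\ref{thm:geoerg}, Case~II carried over verbatim, the $|x|\to\infty$ analysis split into $x\to+\infty$ and $x\to-\infty$, with all four integrals in \eqref{eqn:kernratio} shown to vanish so that only the $\rho(x)$ term survives, yielding exactly the paper's limits $\rho(x)\to\tfrac12\int_0^1 g(\e)\,d\e$ and $\rho(x)\to\tfrac12+\tfrac12\int_{-1}^0 g(\e)\,d\e$. Your explicit flag that the same-sign tail behaviour (needed e.g.\ for $a(x)\to\Y$ as $x\to+\infty$) is not a consequence of \eqref{eqn:A2} alone but must come from the log-concavity of each individual tail is a point the paper invokes without stating, so on that detail you are in fact more careful than the paper itself.
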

\begin{proof}
 Notice that in this case we have the following as $x \to \infty$ we still have $a(x) \to \Y$ and $r(x) \to \phi$. But 
\[ A(x) \cap (0,1) \to \phi \qquad A(x) \cap (-1,0) \to (-1,0)\]
and
\[ R(x) \cap (0,1) \to (0,1) \qquad R(x) \cap (-1,0) \to \phi.\]
Thus in \ref{eqn:kernratio} (with the same choice of $V$ as in theorem \ref{thm:geoerg}) we can further split the integrals on intersections of the domains with $(-1,0)$ and $(0,1)$. On each such domain either the integrand converges to zero or it is bounded and the domain of the integral converges to the empty set. Hence
\[ \limsup_{x\to\infty}\dfrac{KV(x)}{V(x)} \leq \limsup_{x\to\infty} \rho(x) <  1/2\]
since,
\begin{eqnarray*}
 2\rho(x) & = & \dint_{r(x)\cap (0,1)} \left(1 - \dfrac{\pi(x\e)|\e|}{\pi(x)} \right)g(\e)d\e + \dint_{r(x)\cap (-1,0)} \left(1 - \dfrac{\pi(x\e)|\e|}{\pi(x)} \right)g(\e)d\e \\
&  & + \dint_{R(x)\cap (0,1)} \left(1 - \dfrac{\pi(x/\e)}{\pi(x)|\e|} \right)g(\e)d\e + \dint_{R(x)\cap (-1,0)} \left(1 - \dfrac{\pi(x/\e)}{\pi(x)|\e|} \right)g(\e)d\e \\
& \to & \dint_0^1 g(\e)d\e < 1 \quad\textrm{ as }x \to \infty
\end{eqnarray*}
Also as $x \to -\infty$, it can be seen that $A(x) \to \phi$ and $R(x) \to \Y$ hold but
\[ a(x)\cap(0,1) \to (0,1), \qquad a(x)\cap (-1,0) \to \phi \]
and
\[ r(x)\cap(0,1) \to \phi \qquad r(x) \cap (-1,0) \to (-1,0)\]
Hence similarly,
\[ \limsup_{x\to\infty}\dfrac{KV(x)}{V(x)} \leq \limsup_{x\to\infty} \rho(x) <  1\]
since in this case
\begin{eqnarray*}
 2\rho(x) & = & \dint_{r(x)\cap (0,1)} \left(1 - \dfrac{\pi(x\e)|\e|}{\pi(x)} \right)g(\e)d\e + \dint_{r(x)\cap (-1,0)} \left(1 - \dfrac{\pi(x\e)|\e|}{\pi(x)} \right)g(\e)d\e \\
&  & + \dint_{R(x)\cap (0,1)} \left(1 - \dfrac{\pi(x/\e)}{\pi(x)|\e|} \right)g(\e)d\e + \dint_{R(x)\cap (-1,0)} \left(1 - \dfrac{\pi(x/\e)}{\pi(x)|\e|} \right)g(\e)d\e \\
& \to & 2\dint_{-1}^0 g(\e)d\e +  \dint_0^1 g(\e)d\e < 2 \quad\textrm{ as }x \to -\infty
\end{eqnarray*}
This verifies the first condition of \eqref{eqn:limsupcond}. Verification of the second condition of \eqref{eqn:limsupcond} is already done in \emph{case II} of Theorem \ref{thm:geoerg}.
\end{proof}

We now return to the choices of $g$. Let $\mathcal{B}(x;a,b)$ denote the density of a Beta$(a,b)$ random variable. A general class of proposal densities satisfying \eqref{eqn:regularity_g} is then given by
\[  g(\e) = \g\B(-\e;a_1,b_1) \mathbb{I}(-1 < \e < 0) + (1-\g)\B(\e;a_2,b_2)\mathbb{I}(0 < \e < 1).\]
for some $0 < \g < 1$ and some positive numbers $a_1,a_2,b_1,b_2$. Notice that a straightforward choice of $g$ is uniform distribution over (-1,1) which corresponds to the case  $\gamma = 1/2$ and $a_1 = a_2 = b_1 = b_2 = 1$. This indeed allows for large dives but the acceptance rate may drop. Further if the simulated $\e$ is very close to zero and a outer dive is taken, then the proposed state will have large magnitude and result in numerical instability. Specially when the posterior is highly steep then such large dives are not sensible. We shall nevertheless use this choice of $g$ in the next section and show it works well. If however, the target density is steep then a good idea would be to generate $\e$'s close to 1. This can be achived by
\begin{enumerate}
 \item making $\g$ small (e.g. $\g \leq 0.2$) and
 \item making $a_2$ high and $b_2$ small (e.g. $a_2 \geq 2$ and $0 < b_2 \leq 1$).
\end{enumerate}

\section{Application}
In this section we consider two simulation studies. We first consider a bimodal target density and show how the RDMH algorithm explores the modes but the RWMH chain either gets stuck at the mode (if the proposal variance is moderate) or explores the modes at high value of proposal variance but has very low acceptance rate. In the next example we consider another simulation study on a thick tailed target. The RWMH and the LMH algorithms are \emph{not} geometrically ergodic for this target under any kind of proposal (thick-tailed or thin tailed) and this has a serious effect when we try to construct a confidence set based on asymptotic normality of ergodic averages -- for the latter does not hold in this case. However the RDMH is still geometric ergodic and a CLT holds for the ergodic averages.
\subsection{Exploring a multimodal target}\label{sec:appl:multimodal}
\paragraph{Example 1.} Consider the mixture distribution
\[ \pi(x) = 0.5~\phi(x;0,0.25) + 0.5~\phi(x;10,0.25)\]
where $\phi(x;\mu,\sigma) = \exp\left(-0.5(x-\mu)^2/\sigma^2)\right)/(\sigma\sqrt{2\pi})$ is the normal 
density with mean $\mu$ and variance $\sigma^2$.

Clearly this is a bimodal distribution with two separated modes at $x = 0$ and $x = 10$. We compare the RDMH with the RWMH here. We choose $g$, as the uniform distribution on $\Y = (-1,1) - \{0\}$ i.e.
\[ g(\e) = 1/2 \qquad -1 < \e < 1, ~\e \ne 0 \]

\begin{figure}
\centering
 \includegraphics[width=0.9\textwidth]{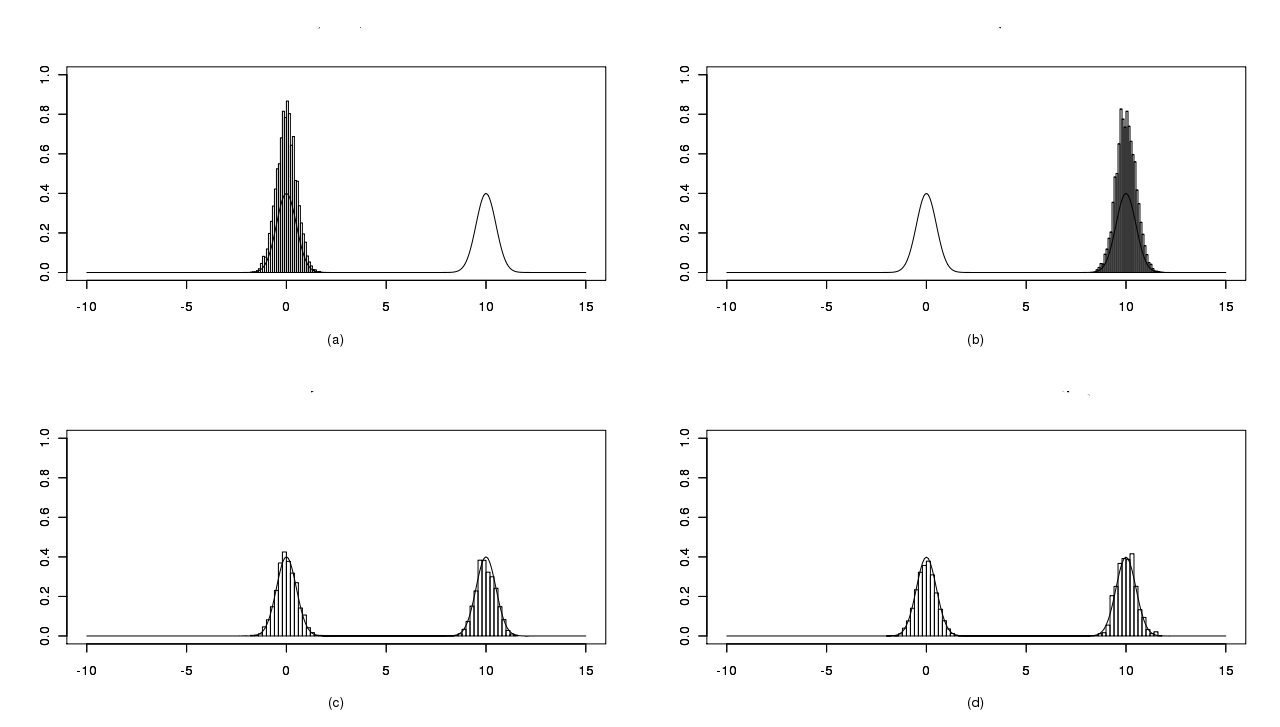}
 \caption{{\small Histograms and True density(solid curve) based on 30,000 sample values out of 50,000 sample values  (burn-in = 20,000) for the bimodal example in section \ref{sec:appl:multimodal}}} \flushleft
\begin{small}
(a) RWMH: $\tau = 2$, initial value: $x_0 = 0$ acceptance rate = 29.916\%, (b) RWMH: $\tau = 2$, initial value: $x_0 = 10$ acceptance rate = 29.656\%, (c)  RWMH: $\tau = 5$, initial value: $x_0 = 10$ acceptance rate = 14.31\%, (d) RDMH, initial value: $x_0 = -2$, acceptance rate = 30.172\% 
\end{small}
 \label{fig:bimodal}
\end{figure}

For the RWMH, we choose the proposal $q(x'|x) = \phi(x';x,\tau^2)$ for different choices of $\tau$. 
Figure  \ref{fig:bimodal} (panels (a) and (b)) shows that RWMH remains stuck at one of the modes for an arbitrary but reasonable choice of $\tau$, and with arbitrary initial value. This indicates significant non-robustness of RWMH with respect to the initial value and the choice of $\tau$ even if it is geometric ergodic in this case. Only when $\tau$ has been appropriately chosen, RWMH performs adequately (Figure \ref{fig:bimodal}, panel (c)). We remark that such ``right" choice is possible only if bimodality of the target posterior is anticipated beforehand, which is unrealistic. Even for the appropriate choice of $\tau$ we notice that the acceptance rate of RWMH is rather small (14.31\%). In contrast RDMH adequately explored the entire state space without requiring knowledge of the target density (Figure \ref{fig:bimodal}, panel (d)), or tuning of the proposal. The acceptance rate, which is 30.172\%, much encouraging compared to the RWMH algorithm.

\paragraph{Example 2.} Now we consider a more challenging case similar to one considered by \citet[p. 1632--1634]{chen_kim2006} in their Example 2. The target is a mixture of univariate normals:
\[ \pi(x) = 0.5~\phi(x;0,10^{-4}) + 0.5~\phi(x;5,1)\]
Several specialized algorithms are available for such \emph{needle-in-haystack} problem among which the equi-energy sampler by \cite{kou2006} is worth mentioning. The equi-energy sampler is extremely efficient once the tuning parameters are chosen carefully. For our purpose we chose few proposals at random and study their performances. In particular, we run each chain (of length 30,000 each after discarding first 20,000 burn-ins) 100 times and estimate $$\hat{P} = \frac{1}{30000}\sum_{i=1}^{30000}\mathbb{I}( |X_t| <0.05)$$, where $X_t$ denotes the Markov chain. From Table \ref{tab:needle_example_table} it can be seen that all the proposals work quite well. The third and fifth proposal results small M.S.E's perhaps due to the fact that they put more weight near zero (the multiplier is close to zero and hence so is the proposed state) and one of the mode is at zero. However, since in practice, it need not be the case it might result in poor acceptance rates. So, a proposal that generates random multiplier  close to 1 should be preferred.

\begin{table}[!h]
 \begin{center}
  \begin{tabular}{|c|cccc|}\hline
   Proposal & $E(\hat{P})$ & s.d.($\hat P$) & MSE($\hat P$) & Avg. accep. rates \\ \hline
   U(-1,1) & 0.5115 & 0.1361 & 0.0184 & 37.14\% \\
  $0.15 \B(-\e;1,1) + 0.85 \B(\e;1,1)$ & 0.4953 & 0.1218 & 0.0147 & 41.08\%\\
  $0.15 \B(-\e;0.5,1) + 0.85 \B(\e;0.5,1)$ & 0.4938 & 0.0470 & 0.0022 & 28.51\% \\
  $0.15 \B(-\e;1,0.5) + 0.85 \B(\e;1,0.5)$ & 0.4912 & 0.1767 &0.0310 & 56.79\%\\
  $0.5 \B(-\e;0.5,0.5) + 0.5 \B(\e;0.5,0.5)$ & 0.5024 & 0.0520 & 0.0027 & 37.91\% \\ \hline
  \end{tabular}
 \end{center}
\caption{Mean, standard deviation and mean squared errors of $\hat{P}$ for different proposals. The average acceptance rates are also reported.}
\label{tab:needle_example_table}
\end{table}

\subsection{Exploring a thick-tailed target}\label{sec:appli:thick}
In this section we consider a thick-tailed target density $\pi$ for which RWMH and LMH are not geometrically ergodic but RDMH is. We chose
\begin{equation}\label{eqn:thicktarget}
 \pi(x) = \dfrac{2}{\pi}\dfrac{1}{(1+x^2)^2}
\end{equation}
It is easy to verify that $\mathrm{E}_\pi|X|^2 < \infty$. Any other thick tailed density or a density which is not log-concave in the tail could have been chosen in place of \eqref{eqn:thicktarget}. 

Notice that $\nabla\log\pi(x) \to 0$ as $|x| \to \infty$. So $\pi$ cannot be log-concave in the tail and hence the RWMH chain is not geometrically ergodic. Moreover, Theorem 4.3 of \cite{roberts1996} assures that the LMH is not geometrically ergodic either.
\begin{table}[!h]
\centering
\caption{P-values of tests for normality performed on the samples of means.}
 \begin{tabularx}{\textwidth}{|p{4cm}|Z|Z|ZZZ|}
 \hline
 & Acceptance & & \multicolumn{3}{c|}{P-value of} \\ 
Algorithm  & rate (\%) & s.e. & \textsf{AD} & \textsf{CVM} & \textsf{Lill} \\ \hline
RDMH 	    & 66.43	 & 0.0074	 & 0.8242    & 0.8241	 & 	0.5737 \\ \hline
RWMH $(\mathcal{N}(0,1.5^2))$ & 46.99	 & 0.0298	 & 	 & 		 & 	 \\
RWMH $(\mathcal{C}(0,1))$ & 44.95	 & 0.0182	 & 	 & 		 & 	 \\
LMH (scale = 2)  & 87.90	 & 0.1767	 	& 	 & 	0	 & 	 \\
LMH (scale = 3)  & 79.39	 & 0.5460	 & 	 & 		 & 		\\
LMH (scale = 4)  & 78.17	 & 0.8631	 & 	 & 		 & 		 \\ \hline 
\end{tabularx}

\label{tab:thick_tests}
\end{table}

Our parameter of interest was $\mathrm{E}_\pi X$. We compared the RDMH algorithm with the RWMH and the LMH algorithms. For RDMH proposal $g$ we again chose the uniform distribution over $(-1,1)$. For RWMH, however we chose two proposals -- one thin-tailed and one thick-tailed. The the thin-tailed proposal is a normal distribution with mean zero and variance $1.15^2 ~(\mathcal{N}(0,1.5^2)$ while thick-tailed proposal is the standard Cauchy distribution $(\mathcal{C}(0,1))$. The scale paramters for the LMH were chosen to be 2, 3 and 4. For each of the algorithms, we ran 1000 independent chains of lengths 50,000 each and obtained the means of last 40,000 values of each such chain. Thus we obtained six samples of estimates of $\mathrm{E}_\pi X$ each of which had size 1000.
Three tests were performed on each of these three samples in order to quantitatively assess the normality behavior. The three tests were the Anderson--Darling (AD) test, Cramer--von Mises (CVM) test and the Lilliefors (Lill) test for normality. The descriptions of the tests can be found in \cite{thode2002}. The tests were performed by \textsf{nortest} package of \textsf{R}-statistical software. The p-values together with the average acceptance rate of each of the 1000 chains and standard error of each of the six samples of empirical means are reported in Table \ref{tab:thick_tests}. The QQ--plots are shown in Figure \ref{fig:thick:qqplots} and the auto-correlation plots for a typical run of the samplers are shown in Figure \ref{fig:thick:acfplots} (no thinning).

It is seen both from the p-values and the QQ--plots that normality holds for the empirical means obtained by RDMH algorithm while in the RWMH and the LMH algorithms they are far from normality.
\begin{figure}[tph]
\centering\includegraphics[width=\textwidth]{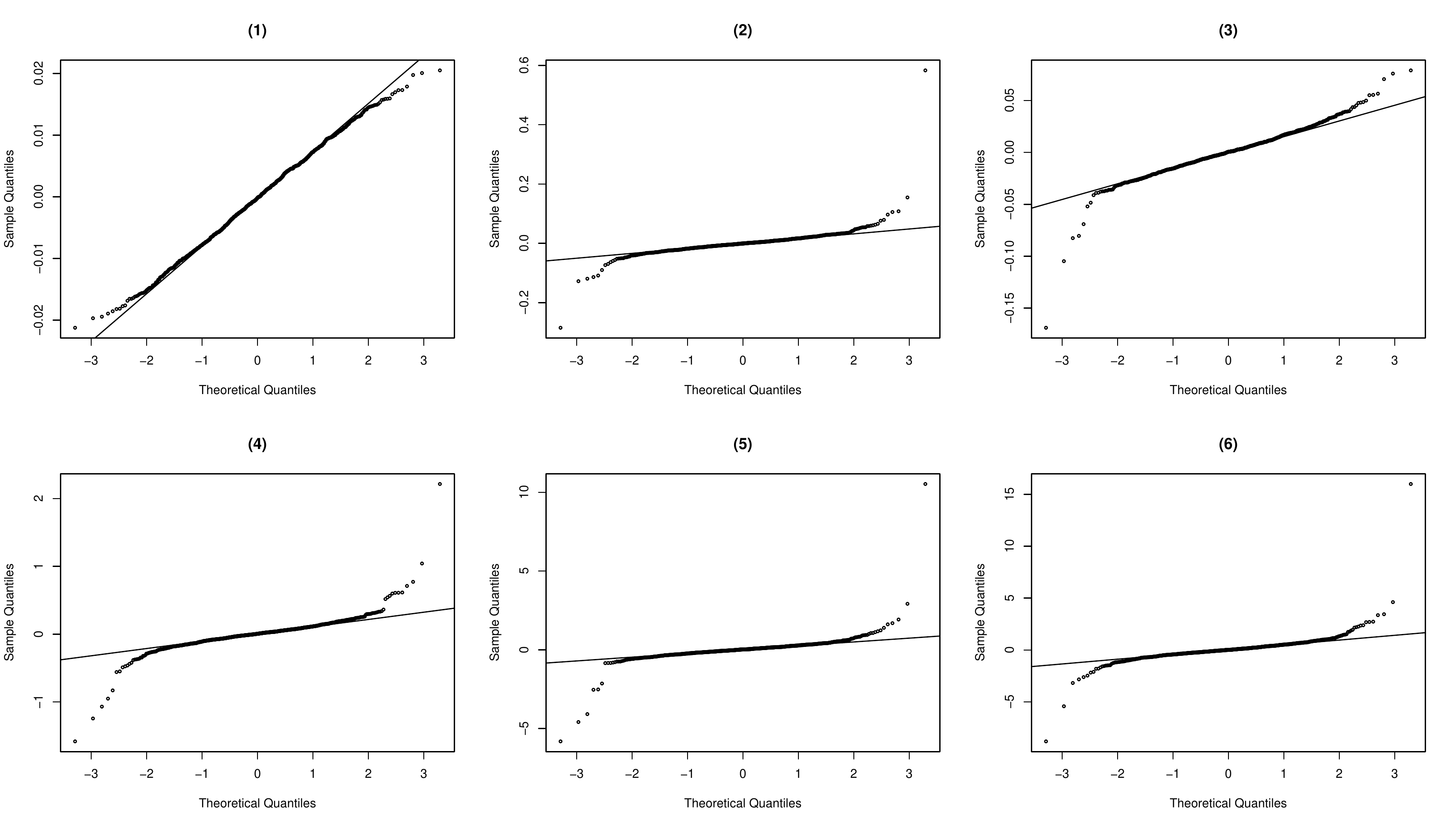}
\caption{QQ--plots of six samples of empirical means. (1) : RDMH; (2) : RWMH with $\mathcal{N}(0,1.5^2)$ proposal; (3) RWMH with $\mathcal{C}(0,1)$ proposal; (4) - (6) : LMH with scales 2, 3 and 4 respectively}
\label{fig:thick:qqplots}
\centering\includegraphics[width=\textwidth]{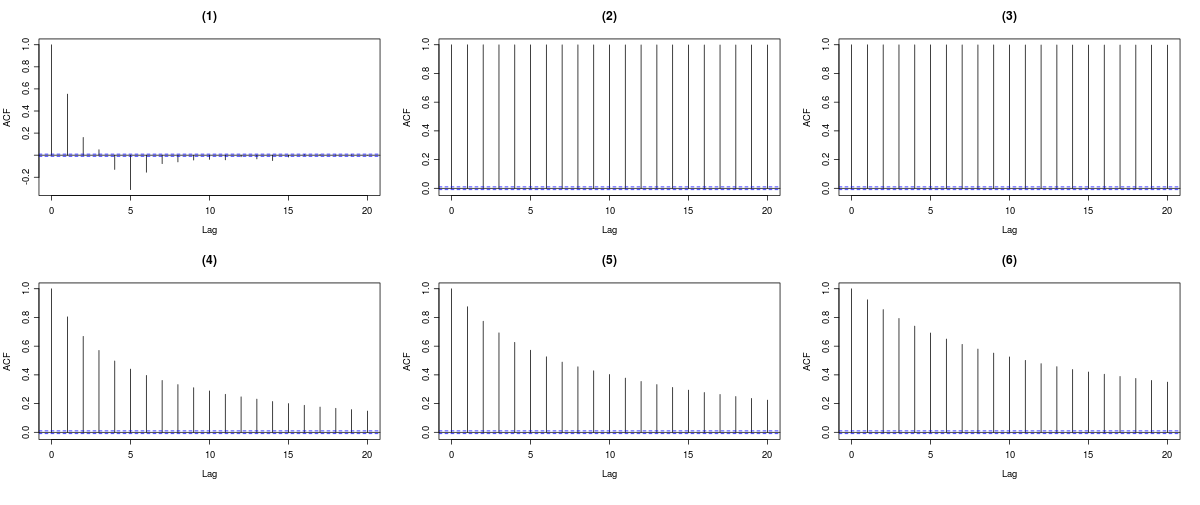}
\caption{Auto-correlation plots of six samplers. (1) : RDMH; (2) : RWMH with $\mathcal{N}(0,1.5^2)$ proposal; (3) RWMH with $\mathcal{C}(0,1)$ proposal; (4) - (6) : LMH with scales 2, 3 and 4 respectively}
\label{fig:thick:acfplots}

\end{figure}

Next, to judge how fast RDMH converges in this scenario we conducted a further study. For each of the algorithms we ran thousand independent chains of length 1000 each and calculated $d(\hat F,F_0)$ -- the Kolmogorov--Smirnov distance between the empirical c.d.f $(\hat F)$ and the true c.d.f.
\[ F_0(x) = \frac{1}{\pi}\arctan x + \frac{1}{2} + \frac{1}{2\pi}\sin(2\arctan x) \]
by the formula
\[ d(\hat F,F_0) = \sup_{x\in \R} |\hat F(x) - F_0(x)|\]
and also obtained the p-values for testing $H_0 : \hat F = F_0$ against the two sided alternative. Table \ref{tab:ksdistances} reports the averages of the Kolmogorov-Smirnov distances and the p-values from the 1000 independent chains. This shows that the RDMH chain converges much faster compared to the RWMH and the LMH algorithms.

\begin{table}
\centering
\caption{Performances of the algorithms in terms of Kolmogorov-Smirnov distances.}
\label{tab:ksdistances}
\begin{tabular}{|c|c|c|c|c|c|c|c} \hline
 Method: & RDMH & \multicolumn{2}{c|}{RWMH}  & \multicolumn{3}{c|}{LMH} \\ \cline{3-7}
	 & 	& $\mathcal{N}(0,1.5^2)$ & $\mathcal{C}(0,1)$ & Scale = 2 & Scale = 3 & Scale = 4 \\ \hline
KS value & 0.0202 & 0.0788 & 0.0647 & 0.3682	& 0.4665 & 0.5000 \\
p-value  & 0.7997 & 0.0338 & 0.0365 & 0	& 0 & 0 \\ \hline
\end{tabular}
 
\end{table}

\subsection{Share price return data}\label{sec:appli:pricedata}
 In this section we consider the daily price returns of Abbey National share between July 31 and October 8, 1991. The data is presented in Table 1 of \cite{buckle1995}. We consider the simple location-scale model proposed and analyzed in \cite{fernandez1998}. Let $p_i, i=0,\ldots,49$ denote the price data in Table 1 of \cite{buckle1995} and $y_i = (p_i - p_{i-1})/p_{i-1},~i=1,\ldots,49$. \cite{fernandez1998} modeled the data as follows:
\begin{equation}\label{eqn:pricemodel}
\begin{split}
 p(y_1,\ldots,y_n | \b, \s, \nu, \g) & = \left[ \dfrac{2}{\g+\frac{1}{\g}}\quad \times\quad \dfrac{\Gamma((\nu+1)/2)}{\Gamma(\nu/2)\sqrt{\pi\nu}}\quad\s^{-1}\right]^n \quad \times \\
& \qquad \prod_{i=1}^n\left[ 1 + \dfrac{(y_i - \b)^2}{\nu\s^2}\left\{\frac{1}{\g^2} \mathbb{I}(y_i > \b) + \g^2\mathbb{I}(y_i < \b)\right\}\right]^{-(\nu+1)/2}\\
\end{split}
\end{equation}
with independent priors on the parameters as follows:
\begin{eqnarray*}
 p(\b) =  1 & ; & p(\s) = 1/\s\\
 p(\nu) = d\exp(-d\nu) &;& p(\phi) = b^a\Gamma(a)^{-1}\phi^{a-1}\exp(-b\phi)
\end{eqnarray*}
where $\phi = \g^2$. The hyper-parameters are given by \cite{fernandez1998} as $d = 0.1, a = 1/2$ and $b = 1/\pi$. We log-transform all the parameters except $\b$ so that the state space becomes $\mathbb{R}^4$. That is, we re-parametrize: $\tilde\sigma = \log\s,~\tilde\nu = \log \nu$ and $\tilde \g = \log \g$. We updated the parameters $(\b,~\tilde\s,~\tilde\nu,\tilde\g)$ sequentially with the following  proposal densities:
\begin{eqnarray*}
 g_\b(\e) & = & 0.80 \mathcal{B}(\e;2,1)\mathbb{I}(0 < \e < 1) ~+ ~0.20 \mathcal{B}(-\e;2,1)\mathbb{I}(-1 < \e < 0) \\
 g_{\tilde\s}(\e) & = & 0.80 \mathcal{B}(\e;3,0.5)\mathbb{I}(0 < \e < 1) ~+ ~0.20 \mathcal{B}(-\e;3,0.5)\mathbb{I}(-1 < \e < 0) \\
 g_{\tilde\nu}(\e) & = & 0.80 \mathcal{B}(\e;3,0.5)\mathbb{I}(0 < \e < 1) ~+ ~0.20 \mathcal{B}(-\e;3,0.5)\mathbb{I}(-1 < \e < 0) \\
 g_{\tilde\g}(\e) & = & 0.80 \mathcal{B}(\e;2,0.5)\mathbb{I}(0 < \e < 1) ~+~ 0.20 \mathcal{B}(-\e;2,0.5)\mathbb{I}(-1 < \e < 0)
\end{eqnarray*}
where $\mathcal{B}(\e;a,b)$ is the Beta density proportional to $ \e^{a-1}(1-\e)^{b-1}\mathbb{I}(0 < \e < 1)$. We tried couple of other such mixtures too and the results were very close. Using a proposal density uniform on $(-1,1)$ is not a good idea in this case as discussed before.

\begin{figure}[htp]
 \centering\includegraphics[width=0.8\textwidth]{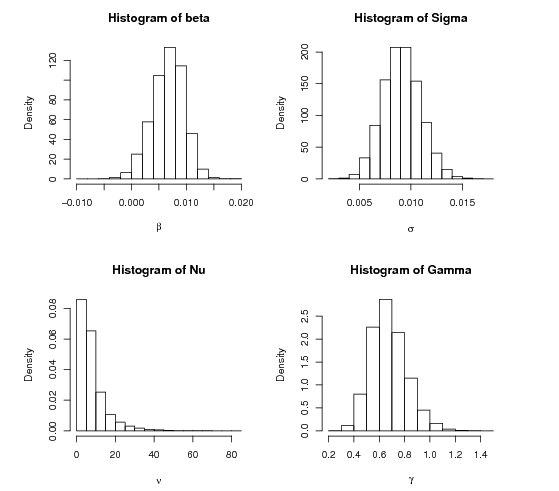}
 \caption{Density histograms of the posteriors in share price return data example.}
 \label{fig:hist_pricedata}

 \centering\includegraphics[width=0.8\textwidth]{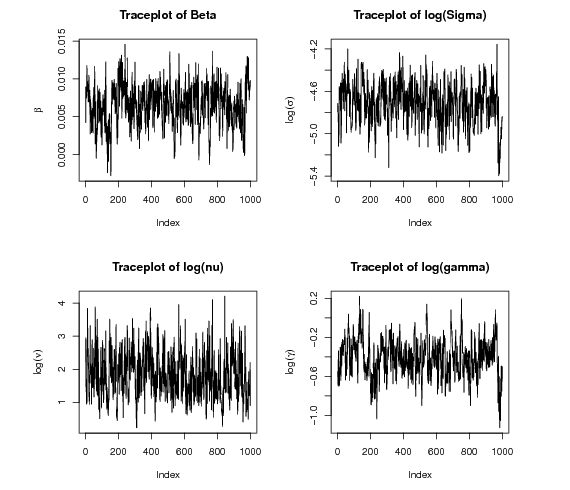}
 \caption{Traceplots of the the last thousand iterations of the RDMH chain for the share price return data example.}
 \label{fig:trace_pricedata}
\end{figure}

\begin{figure}[htp]
 \centering\includegraphics[width=\textwidth]{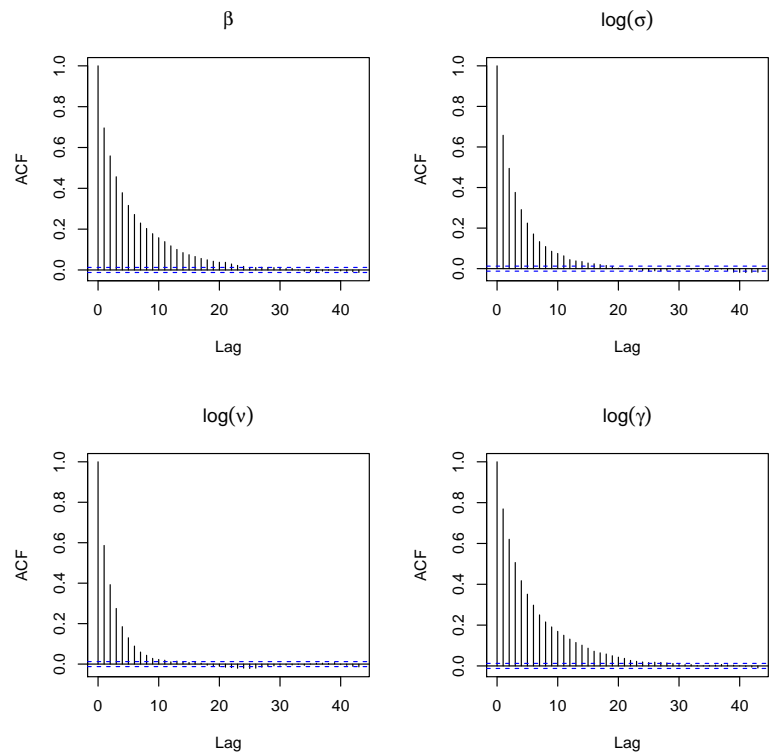}
 \caption{Auto-correlation function of the RDMH sampler.}
 \label{fig:acf_price}
\end{figure}
\begin{table}[htp]
 \begin{center}
  \begin{tabular}{c|cc|cc} \hline\hline
   Parameter & \multicolumn{2}{c|}{RDMH} & \multicolumn{2}{c}{Gibbs} \\ 
             & mean & s.d. & mean & s.d. \\ \hline
  $\b$	& 0.0066 & 0.0029 & $-0.0068$ & 0.0028 \\
  $\s$  & 0.0091 & 0.0018 & 0.0091 & 0.0018 \\ \hline
  \end{tabular}
 \caption{Posterior summaries for $\b$ and $\s$ for the RDMH and Gibbs chains. The result for the Gibbs chain are taken from \cite{fernandez1998}.}
\label{tab:post_beta_sigma}
\vspace{0.5cm}
\begin{tabular}{ccc} \hline \hline
 Parameter & mean(RDMH) & s.d. (RDMH) \\ \hline
 $\nu$ & 8.0119 & 7.0520 \\
 $\g$ & 0.6745 & 0.1408 \\ \hline
\end{tabular}
\caption{Posterior summaries for $\nu$ and $\g$ for the RDMH chain.}
\label{tab:post_nu_gam}
 \end{center}
\end{table}

\cite{fernandez1998} used a Gibbs sampler approach with data-augmentation. They faced some numerical difficulties and perturbed the $y_i$'s slightly to resolve the numerical problems. The RDMH sampler, however, did not face any numerical problem. The results obtained by RDMH differs from the Gibbs sampler perhaps due to this reason. Actually, the posteriors of $\s,~\nu$ and $\g$ were same whether we used Gibbs sampler or not (see the paper by \citep{fernandez1998} for the Gibbs sampler output). The posterior of $\b$ were quite dissimilar for the RDMH and Gibbs samplers. For the Gibbs sampler the posterior was mainly concentrated between $-0.016$ and $0.002$ while it was concentrated between $-0.005$ and $0.015$ for the RDMH chain. Clearly, the Gibbs sampler fails to cover the long tail of the posterior of $\b$ while the RDMH explores it quite easily.

To ensure we also ran a random walk MH sampler and found that the results for the RWMH sampler coincided with that of the RDMH sampler. The summaries of the RDMH sampler is given in Table \ref{tab:post_beta_sigma} and \ref{tab:post_nu_gam} and the histograms and traceplots of the same are given in Figure \ref{fig:hist_pricedata} and \ref{fig:trace_pricedata} respectively. The autocorrelation plots of the RDMH chains are given in Figure \ref{fig:acf_price}. We ran the sampler for 160,000 iterations and discarded the first 10,000 samples as burn-ins. We then thinned the remaining 150,000 samples by 5. Convergence was achieved much earlier though. We also found that the mixing for the RDMH sampler was superior to that of the RWMH sampler.

\section{Further works}\label{sec:further}
We conclude this article with some purview of possible extension to higher dimension. Suppose $\pi$ is a density supported on $\R^k$ and $g$ is density on $\Y^k$. Then the algorithm is given in Algorithm \ref{algo:highd}.
\begin{algo}\label{algo:highd} \topline Random dive MH on $\R^k$ \botline \normalfont \sffamily
\begin{itemize}
 \item Input: Initial value $\mathbf x^{(0)}$ with no component equal to 0, and number of iterations $N$. 
 \item For $t=0,\ldots,N-1$
\begin{enumerate}
 \item Generate $\be \sim g(\cdot)$ and $u_i \sim$ U$(0,1),~i=1,\ldots,k$ independently
 \item For each $i$ if $0 < u_i < 1/2$, set $x'_i = x^{(t)}_i\e_i$. else set $x_i' = x_i^{(t)}/\e_i$
 \item Let $I = \{ 1\leq i \leq k : u_i < 1/2 \}$ and set
\[ \a( \mathbf x ,\be) = \min\left\{ \dfrac{\pi(\mathbf x')}{\pi(\mathbf x)}\dfrac{\prod_{i\in I}\e_i}{\prod_{j\notin I}\e_j},~ 1\right\}\]
\item Set \[\mathbf x^{(t+1)} = \left\{\begin{array}{ccc}
 \mathbf x' & \textsf{ with probability } & \a( \mathbf x^{(t)} ,\be) \\
 \mathbf x^{(t)}& \textsf{ with probability } & 1 - \a( \mathbf x^{(t)} ,\be)
\end{array}\right.\]
\end{enumerate}
\item End for
\end{itemize}
\botline \rmfamily
\end{algo}
This algorithm is still irreducible and aperiodic. It is also Harris recurrent on $\R^{*k}$ and every compact subset of $\R^{*k}$ is still small. The proof is along the same line as Theorem \ref{thm:ergodicity} and \ref{thm:smallset}. Geometric ergodicity is, however, a property that requires a different approach. It is expected that geometric ergodicity of this algorithm still holds for a large class of densities (especially the thick-tailed ones) on higher dimensions. We hope that this article would draw attention of the researchers and the question regarding geometric ergodicity in higher dimension situation would be settled.

The proposal density $g(\be)$ on $\Y^k$ can be chosen to be the product of proposal densities on $\Y$. In such a case, one should choose the univariate proposals which generate $\e$'s close to 1 with high probabilities each (for example, the mixture proposals in Section \ref{sec:appli:pricedata}). This will ensure that the proposed states are not too far away from the current state (in $\R^k$) to reduce the acceptance rate significantly.

\section*{Acknowledgement}
Thanks are due to two anonymous referees whose critical reading of the manuscript and constructive comments lead to major improvement of the paper.


\vspace{1cm}
\linespread{1}
\rmfamily
\emph{Address for correspondence}: \\ Department of Statistics \\ University of Chicago \\ 5734 S. University Avenue \\ Chicago, IL 60637. USA \\ E-mail: sdutta@galton.uchicago.edu
\end{document}